\documentclass[letterpaper, 10 pt, conference]{ieeeconf}  

\IEEEoverridecommandlockouts                              

\usepackage[T1]{fontenc}
\usepackage{microtype}
\usepackage{amsmath, amssymb, amsfonts}
\usepackage{mathtools}
\usepackage{cite}
\usepackage{tikz,pgfplots}
\usepackage{derivative}
\usepackage[caption=false]{subfig}
\usepackage{booktabs}
\usepackage[edges]{forest}
\usepackage{mathrsfs}

\usepackage{multirow}
\usepackage{makecell}

\usepackage{tikz}
\usepackage{pgfplots}   
\pgfplotsset{compat=newest}   
\usetikzlibrary{plotmarks}   
\usetikzlibrary{arrows.meta}   
\usepgfplotslibrary{patchplots}   
\usepackage{grffile}

\usepackage{amsthm}
\theoremstyle{plain}

\newtheorem{thm}{Theorem}

\theoremstyle{definition}
\newtheorem{assump}{Assumption}

\pgfplotsset{compat=newest}

\newcommand{\ie}{\emph{i.e.},}
\newcommand{\eg}{\emph{e.g.},}
\newcommand{\mb}{\mathbf}
\newcommand{\mc}{\mathcal}
\newcommand{\mbb}{\mathbb}

\newcommand{\bn}{\bar{\nu}}

\DeclarePairedDelimiter{\interiorpars}{(}{)}
\newcommand{\interior}{\operatorname{interior}\interiorpars}

\DeclarePairedDelimiter\abs{\lvert}{\rvert}%
\DeclarePairedDelimiter\norm{\lVert}{\rVert}%

\DeclareMathOperator{\E}{\mathbb{E}}
\DeclareMathOperator{\Eo}{\mathbb{E}_\nu}

\DeclareMathOperator{\Probo}{\mathbb{P}_\nu}

\DeclareMathOperator{\Cov}{Cov}
\newcommand{\mytrace}[1]{\operatorname{tr}(#1)}

\providecommand\given{}
\DeclarePairedDelimiterXPP\Eb[1]{\E}{[}{]}{}{
\renewcommand\given{  \nonscript\:
  \delimsize\vert
  \nonscript\:
  \mathopen{}
  \allowbreak}
#1
}

\DeclarePairedDelimiterXPP\Ebo[1]{\Eo}{[}{]}{}{
\renewcommand\given{  \nonscript\:
  \delimsize\vert
  \nonscript\:
  \mathopen{}
  \allowbreak}
#1
}

\DeclarePairedDelimiterXPP\Po[1]{\Probo}{[}{]}{}{
\renewcommand\given{  \nonscript\:
  \delimsize\vert
  \nonscript\:
  \mathopen{}
  \allowbreak}
#1
}

\providecommand\given{}
\DeclarePairedDelimiterXPP\Covb[1]{\Cov}{[}{]}{}{
\renewcommand\given{  \nonscript\:
  \delimsize\vert
  \nonscript\:
  \mathopen{}
  \allowbreak}
#1
}

\makeatletter
\let\oldabs\abs
\def\abs{\@ifstar{\oldabs}{\oldabs*}}
\let\oldnorm\norm
\def\norm{\@ifstar{\oldnorm}{\oldnorm*}}
\makeatother

\title{\LARGE \bf
When expectation fails: stochastic MPC of linear systems with random input losses
}

\author{Paul Trodden, Xinda Li
\thanks{The authors are with the School of Electrical and Electronic Engineering, University of Sheffield, UK. Email:
        {\tt\small\{p.trodden, xli431\}@sheffield.ac.uk}}%
}

\begin{document}

\maketitle
\thispagestyle{empty}
\pagestyle{empty}

\begin{abstract}
We consider stochastic model predictive control (MPC) for constrained linear systems subject to multiplicative binary input uncertainty, motivated by applications such as networked control with packet losses and intermittent actuation. A common approach in this setting replaces the stochastic dynamics with their expectation, yielding tractable formulations that admit standard terminal ingredients and stability guarantees in expectation. We show that such formulations can exhibit structural properties that differ fundamentally from those of deterministic MPC and may be misleading as indicators of realized closed-loop behaviour. In particular, the expected value function is not necessarily monotonic in the prediction horizon, and value function–based inner approximations of the region of attraction may deteriorate as the horizon increases. Furthermore, we establish a probabilistic comparison with certainty-equivalent (optimistic) MPC, showing that the latter can ensure a strictly positive probability of recursive feasibility in situations where stochastic MPC certifies feasibility but fails with probability one. These results highlight inherent limitations of expectation-based stochastic MPC for systems with multiplicative binary uncertainty and motivate a re-examination of how stochasticity is incorporated into constrained predictive control design for such systems.
%
%
\end{abstract}

\section{Introduction}

In this paper, we consider constrained linear systems subject to multiplicative binary uncertainty on the input, a setting that captures a range of phenomena including faulty or intermittent actuation. A particularly important instance of this setting arises in networked control systems~\cite{HNX07,SGJ22,FKW23}, where control inputs are transmitted over unreliable communication channels and may be randomly lost. A classical and seminal line of work~\cite{SSF+07,YX11,GSG+12} has shown that such effects can be captured through probabilistic transmission models, leading to fundamental limitations on achievable performance; in particular, stochastic stability of a linear system subject to random packet losses depends critically on the packet arrival rate exceeding a system-dependent threshold~\cite{SSF+07}.

The problem is more challenging still when constraints are present on the system. These arise from physical limitations, safety requirements and operational considerations, and lead naturally to model predictive control (MPC)~techniques~\cite{RR+17} as a means of achieving effective control. In the setting where the input uncertainty is probabilistic, stochastic MPC---in which control inputs are computed by optimizing an expected cost subject to constraints defined with respect to predicted behaviour~\cite{KCbook}---provides a logical and accurate means of handling the uncertainty. 

A common paradigm in stochastic MPC is the use of expected dynamics, in which random effects---here the packet losses or multiplicative uncertainties---are replaced by their mean values~\cite{HPO+17}. This yields tractable optimization problems and potentially enables the use of standard terminal ingredients for stability and feasibility~\cite{KCbook}. As a consequence, such formulations may naturally inherit Lyapunov-type guarantees in expectation and align with classical threshold-based characterizations of stochastic stability.

The use of expected dynamics in this setting, however, introduces a fundamental inconsistency: no realization of the stochastic system evolves according to these averaged dynamics. In particular, when the uncertainty is binary multiplicative, the realized system switches between two distinct modes and never coincides with its expectation. This raises an important question: what role do the structural properties of stochastic MPC formulations based on expected dynamics play in certifying the stability and performance of the actual closed-loop system? While expectation-based analysis is sufficient to establish mean stability properties, it is not clear whether this provides reliable insight into feasibility, invariance, and performance along realized trajectories.

It is important to note that many existing stochastic MPC formulations for systems with packet losses rely on assumptions or modifications that simplify analysis but substantially alter the nature of the underlying problem. These include, for example, the use of unconstrained~\cite{RQA13} or partially constrained~\cite{MCQ18} settings, the assumption of global terminal control Lyapunov functions~\cite{QN12}, and the introduction of auxiliary mechanisms such as input buffering~\cite{QN11,prabhat2018stabilizing,LKZ18} or controller add-ons and modifications~\cite{TD13,sun2019resilient, APC+22, WPS+22} to mitigate packet losses.  While such modifications can restore desirable properties such as recursive feasibility or stability in expectation, they do so by effectively reshaping the problem being solved or building mitigatory actions into the controller, rather than addressing the fundamental impact of stochasticity on the properties of the basic controller.

In this paper, we investigate these issues for constrained linear systems subject to random input losses. We compare stochastic MPC formulations based on expected dynamics with their optimistic, certainty-equivalent counterparts, which neglect stochastic effects in prediction. Our analysis reveals structural issues and discrepancies that do not arise in deterministic MPC and have important implications for the interpretation of guarantees from stochastic MPC. In particular, our main contributions are:
\begin{enumerate}
    \item We show that stochastic MPC formulations employing terminal ingredients based on expected dynamics do not inherit desirable structural properties inherent to deterministic MPC with terminal conditions. In particular, the expected value function is not necessarily monotonic in the prediction horizon.
    \item We establish that value function–based inner approximations of the region of attraction may deteriorate as the prediction horizon increases. Specifically, sublevel sets of the MPC value function are not guaranteed to enlarge with horizon length, and may even shrink owing to the propagation of variance over the horizon. This has significant implications for attempts to certify stability based on bounding the value function.
    \item We perform a probabilistic comparison between stochastic MPC and optimistic MPC, and establish a probabilistic dominance of the latter. In particular, we find that optimistic MPC ensures a strictly positive probability of the system states remaining with the controller's region of feasibility, but, in stark contrast, stochastic MPC may certify feasibility of some states which leave the region of feasibility with probability one.
\end{enumerate}

The organization of this paper is as follows. Section~\ref{sec:prob} defines the problem setting. In Section~\ref{sec:formulations}, the formulations of stochastic and optimistic MPC problems are given, both with and without terminal constraints. Section~\ref{sec:main} presents our main theoretical results, which are illustrated by examples in Section~\ref{sec:illus}. Conclusions are drawn in Section~\ref{sec:conc}.

\textit{Notation:} $\mathbb{R}_{\geq0}$ is the set of non-negative real numbers and $\mathbb{N}_{\geq0}$ the non-negative integers. $\mathbb{R}^n$ and $\mathbb{R}^{n \times m}$ are respectively the sets of real-valued $n$ vectors and $n \times m$ matrices. The generic norm of a vector $x$ is denoted by $\abs{x}$. For a matrix  $M=M^\top$, $M \succeq 0$ denotes positive semidefiniteness and $M \succ 0$ denotes positive definiteness. The spectral radius of a square matrix $M$ is noted by $\rho(M)$. The expectation of a random variable $x$ is $\Eb{x}$.

\section{Problem setting}
\label{sec:prob}

We consider the discrete-time linear time-invariant system
\begin{equation}
  (\forall k \in \mbb{N}) \ x_{k+1} = Ax_k + \nu_k Bu_{k}, 
  \label{system}%
\end{equation}
where the $x_k \in \mathbb{R}^n$ and $u_k \in \mathbb{R}^m$ are, respectively, the state and input at time $k \in \mbb{N}$, and $\nu_k$ is a binary random variable.

This models the situation where the controller and plant are connected by a communication channel that is subject to packet losses; although the controller always receives the state measurement $x_k$ at time $k$, the control input $u_k$ is not necessarily successfully transmitted to the plant actuators: a value of $\nu_k=1$ indicates successful transmission of $u_k$, while $\nu_k=0$ indicates a packet loss. It also handles a range of applications where multiplicative, binary uncertainty is present; for example, actuator faults.

The system is subject to the constraints
\begin{equation}
    (\forall k \in \mbb{N}) \ u_k \in \mathbb{U}, \quad x_k \in \mathbb{X}.
\end{equation}
We make the following assumptions about this setup.

\begin{assump}
    The matrix $A$ is unstable, \ie~$\rho(A) > 1$.
\end{assump}

\begin{assump}
    The pair $(A,B)$ is reachable.
\end{assump}

\begin{assump}
    The sequence $\{\nu_k\}$ consists of i.i.d. random variables, each drawn from a Bernoulli distribution with mean value $\bar{\nu}>0$.
\end{assump}

\begin{assump}\label{assump:measurement}
    The value of $x_k$ is known at time step $k \in \mbb{N}$.
\end{assump}

\begin{assump}
    The set $\mathbb{U}$ is compact and the set $\mathbb{X}$ is closed. Each set contains the origin in its interior.
\end{assump}

The aim is to regulate the state $x_k$ to the origin, while satisfying the constraints, despite the random binary uncertainty.

This is a simplified, basic instance of the more general control problem over lossy communication channels; it is considered in such a basic form in the current paper to maximize clarity of the exposition, findings, and implications. 

\section{Stochastic and optimistic model predictive control}
\label{sec:formulations}

In this section, we introduce two different stochastic MPC formulations---one with terminal constraints and one without---and the corresponding \emph{optimistic} MPC formulations. These formulations are standard and are presented here primarily to establish a common framework and notation for the developments that follow.

\subsection{Terminally constrained stochastic MPC}

We consider a classical formulation of a stochastic model predictive control, in which the object is to minimize an expected cost subject to the constraints. At each time step $k$, given the measured state $x_k$, a finite-horizon stochastic optimal control problem is solved, and only the first control input is applied. The optimal control problem $\mathbb{P}_{N,\bn}^s(x_k)$ at a state $x_k$ is defined as follows; for notational convenience, we write $x \coloneqq x_k$.
\begin{equation}\label{eq:smpc_cost}
    V^s_{N,\bn}(x) = \min_{\mb{u}_N} \Eb{V_N(x,\mb{u}_N, \{\nu_i\}) \given x},
\end{equation}
subject to, for $i = 0,\dots,N-1$,
\begin{subequations}
    \begin{align}
        x_0 &= x, \label{eq:init}\\
        x_{i+1} &= Ax_i + \nu_i B u_i, \\  
        u_i &\in \mbb{U},\\
        x_i &\in \mbb{X},\\
        x_N &\in \mathbb{X}_f.\label{eq:term}
    \end{align}\label{eq:smpccons}
\end{subequations}
In this problem, the index $i$ denotes the prediction time relative to the current state $x_k$, and $\{\nu_i\}_{i=0}^{N-1}$ are i.i.d. Bernoulli random variables with the same distribution as $\nu_k$. The decision variable is the sequence of controls
\begin{equation}
    \mb{u}_N \coloneqq \{ u_0, u_1, \ldots, u_{N-1}\},
\end{equation}
and the cost function is
\begin{equation}
    V_N(x,\mb{u}_N,\{\nu_i\} ) \coloneqq V_{f,\bn}(x_N) + \sum_{i=0}^{N-1} \ell(x_i,\nu_i u_i),
\end{equation}
where $V_f$ and $\ell$ are, respectively, the terminal and stage costs
\begin{align}
    V_{f,\bn}(x) &\coloneqq x^\top P_{\bn} x, \\
    \ell(x,u) &\coloneqq x^\top Q x + u^\top R u,
\end{align}
satisfying the following assumption; further assumptions on $P_{\bn}$ will be stated in due course.
\begin{assump}
    The matrices $Q, R, P_{\bn} \succ 0$.
\end{assump}

The expectation in \eqref{eq:smpc_cost} can be evaluated analytically, yielding an equivalent deterministic problem $\bar{P}^s_{N,\bar{\nu}}(x)$ (\ie~identical in minimizer and optimal value) expressed in terms of the predicted state mean and covariance. Define
\begin{align}
    \bar{x}_i &\coloneqq \Eb{x_{k+i} \given x_k = x},\\
    \Sigma_i &\coloneqq \Eb{ (x_{k+i} - \bar{x}_i) (x_{k+i} - \bar{x}_i)^\top \given x_k = x},
\end{align}
The equivalent problem  $\bar{\mbb{P}}^s_{N,\bar{\nu}}(x)$ is
\begin{multline}\label{eq:smpc_cost3}
    V_{N,\bn}^s(x) =  \min_{\mb{u}_N} V_f (\bar{x}_N) + \mytrace{P_{\bn}\Sigma_N} \\ + \sum_{i=0}^{N-1} \ell(\bar{x}_i,\sqrt{\bar{\nu}} u_i) + \mytrace{Q\Sigma_i},
\end{multline}
subject to, for $i = 0,\dots,N-1$,
\begin{subequations}
    \begin{align}
        \bar{x}_0 &= x, \\
        \Sigma_0 &= 0,\\
        \bar{x}_{i+1} &= A\bar{x}_i + \bar{\nu} B u_i, \label{eq:ave_dyn}\\
        \Sigma_{i+1} &= A \Sigma_i A^\top + \bar{\nu}(1-\bar{\nu})Bu_i u_i^\top B^\top,\\
        u_i &\in \mbb{U},\\
        \bar{x}_i &\in \mbb{X},\\
        \bar{x}_N &\in \mbb{X}_{f,\bn}.
    \end{align}
\end{subequations}
Problem $\bar{\mbb{P}}^s_{N,\bar{\nu}}(x)$ utilizes the \emph{expected} dynamics of the system to optimize for a sequence of admissible inputs that lead to the expected terminal state satisfying a designed terminal constraint. If, then, the terminal cost matrix $P_{\bn}$ and terminal set $\mbb{X}_f$ are designed to satisfy standard terminal conditions~\cite{RR+17}, then the closed-loop system inherits certain theoretical properties towards guaranteeing stochastic stability.
\begin{assump}\label{assump:P}
    The terminal cost function $V_{f,\bn}$ satisfies, for all $x \in \mbb{R}^n$,
    \begin{equation}\label{eq:Plyap}
        \Ebo{V_{f,\bn}(Ax + \nu B K_{f,\bar{\nu}}) + \ell(x, \nu K_{f,\bar{\nu}} x) \given x} \leq V_{f,\bn}(x) ,
    \end{equation}
    for some $K_{f,\bar{\nu}}$ that stabilizes $(A,\bar{\nu}B)$. 
\end{assump}
The particular manifestation of~\eqref{eq:Plyap}, having taken the necessary expectation over the random variable $\nu$, is
\begin{multline}\label{eq:Plyap2}
    (A+\bar{\nu} BK_{f,\bar{\nu}})^\top P_{\bn} (A+\bar{\nu} B K_{f,\bar{\nu}}) - P_{\bn}  \preceq \\  -\left[  Q + \bar{\nu} K_{f,\bar{\nu}}^\top R K_{f,\bar{\nu}} + \bar{\nu}(1-\bar{\nu})K_{f,\bar{\nu}}^\top B^\top P_{\bn} BK_{f,\bar{\nu}}\right],  
\end{multline}
where the additional term in $\bar{\nu}(1-\bar{\nu})$ originates from the variance of $\nu$; it is well known that for~\eqref{eq:Plyap2} to have a positive definite solution $P_{\bn}$ requires $\bar{\nu}$ to exceed a system-dependent critical value $\nu_c$~\cite{SSF+07}.
\begin{assump}\label{assump:Xf}
    The terminal set $\mbb{X}_{f,\bn}$ has $0 \in \interior{\mbb{X}_{f,\bn}}$ and satisfies, for all $x \in \mbb{X}_{f,\bn}$,
    \begin{align}
    K_{f,\bar{\nu}} x \in \mbb{U} \ \text{and} \ x &\in \mbb{X},\\
    \Ebo{Ax + \nu BK_{f,\bar{\nu}} x \given x} &\in \mbb{X}_{f,\bn}. \label{eq:Xf}
    \end{align}
\end{assump}
The particular instance of~\eqref{eq:Xf}, taking the expectation, is
\begin{equation}\label{eq:termnu}
    (\forall x \in \mbb{X}_{f,\bn}) \ (A+\bar{\nu}BK_{f,\bar{\nu}}) x \in \mbb{X}_{f,\bn}.
\end{equation}
The terminal conditions depicted in Assumptions~\ref{assump:P} and~\ref{assump:Xf} are therefore concerned with the \emph{average} system dynamics. When used in the MPC problem~$\mbb{P}^s_{N,\bn}(x)$ (or equivalently $\bar{\mbb{P}}^s_{N,\bar{\nu}}(x)$), these conditions lead to a Lyapunov decrease for the expected closed-loop system. To aid the statement of the result, define the set of states for which $\mbb{P}^s_{N,\bn}(x)$ is feasible as $\mbb{X}_{N,\bar{\nu}}$. Moreover, define the MPC control law as
\begin{equation}
    u = \kappa^s_{N,\bar{\nu}}(x) \coloneqq u^s_0(x),
\end{equation}
where $u^s_0(x)$ is the first control in the optimal sequence obtained by solving $\mbb{P}^s_{N,\bn}(x)$ or $\bar{\mbb{P}}^s_{N,\bar{\nu}}(x)$.
\begin{thm}\label{lem:stoch}
    The value function $V^s_{N,\bn}(x)$ satisfies, for all $x \in \mbb{X}_{N,\bar{\nu}}$,
    \begin{equation}\label{eq:stochlyap}
        \Ebo{ V_{N,\bn}^s(Ax + \nu B \kappa^s_{N,\bn}(x)) + \ell(x, \nu\kappa^s_{N,\bn}(x)) \given x}\leq V_{N,\bn}^s(x). 
    \end{equation}
\end{thm}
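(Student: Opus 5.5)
The plan is the standard shifted-candidate argument of deterministic MPC with terminal ingredients~\cite{RR+17}, carried out inside the expectation over the first loss variable $\nu_0=\nu$. Fix $x\in\mbb{X}_{N,\bn}$ and let $\mb{u}^s_N(x)=\{u_0,\dots,u_{N-1}\}$ be optimal for $\bar{\mbb{P}}^s_{N,\bn}(x)$, with predicted means $\bar{x}_i$ and covariances $\Sigma_i$. For each realisation $x^+=Ax+\nu Bu_0$, $\nu\in\{0,1\}$, I will use the candidate
\[
\tilde{\mb{u}}_N=\{u_1,\dots,u_{N-1},K_{f,\bn}\bar{x}_N\}.
\]
Its last input is a fixed vector fixed at time $k$, so the candidate is an open-loop sequence, as the formulation requires. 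The target is
\[
V^s_{N,\bn}(x^+)\le \Eb{V_N(x^+,\tilde{\mb{u}}_N,\{\nu_i\})\given x^+},
\]
averaged over $\nu$.

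The first key step is cost bookkeeping via the tower property. Conditioning on $\nu_0$, the expected candidate cost from $x^+$, averaged over $\nu_0$, equals the expected cost of the original problem from stage $1$ onward. The only difference is that the terminal term $V_{f,\bn}(x_N)$ is replaced by $\ell(x_N,\nu_N K_{f,\bn}\bar{x}_N)+V_{f,\bn}(Ax_N+\nu_N BK_{f,\bn}\bar{x}_N)$. Hence
\begin{multline*}
\Ebo{\Eb{V_N(x^+,\tilde{\mb{u}}_N,\cdot)\given x^+}} + \ell(x,\nu u_0)\text{-average}
= V^s_{N,\bn}(x) \\
+ \Eb{\ell(x_N,\nu_N K_{f,\bn}\bar{x}_N)+V_{f,\bn}(Ax_N+\nu_NBK_{f,\bn}\bar{x}_N)-V_{f,\bn}(x_N)}.
\end{multline*}
Writing $x_N=\bar{x}_N+e$ with $\Eb{e}=0$ and $\Eb{ee^\top}=\Sigma_N$, the bracket splits into two parts. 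The mean part is exactly the left side of \eqref{eq:Plyap} at $\bar{x}_N$ minus $V_{f,\bn}(\bar{x}_N)$, which is nonpositive by Assumption~\ref{assump:P}. The remainder is
\[
\mytrace{(A^\top P_{\bn}A+Q-P_{\bn})\Sigma_N}.
\]

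The second key step is to show that this covariance term is nonpositive, or else to absorb it. The Lyapunov inequality \eqref{eq:Plyap2} controls the closed-loop matrix $A+\bn BK_{f,\bn}$, not the open-loop $A$, so it does not directly give the required bound. I would first try to exploit freedom in the candidate, for instance by choosing the appended input so that it also partially corrects the deviation. If open-loop candidates cannot do this, I would check whether the paper's intended statement implicitly uses the expected successor.

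The third key step is feasibility of $\tilde{\mb{u}}_N$ at each realised $x^+$. The mean trajectory from $x^+$ under $\tilde{\mb{u}}_N$ equals $\bar{x}_{i+1}+(\nu-\bn)A^iBu_0$. Only its average over $\nu$ coincides with the feasible shifted trajectory $\{\bar{x}_1,\dots,\bar{x}_N,(A+\bn BK_{f,\bn})\bar{x}_N\}$, which lies in $\mbb{X}$ and ends in $\mbb{X}_{f,\bn}$ by \eqref{eq:termnu}.

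I expect the feasibility step and the covariance step to be the main obstacles. Both are exactly where expectation-based reasoning does not transfer to realisations. For feasibility, I would first try to argue by convexity: if $\mbb{X}$, $\mbb{U}$ and $\mbb{X}_{f,\bn}$ are convex, then feasibility holds "in mean". The rigorous route, which I would adopt if the realisation-wise claim fails, is to establish the inequality as a bound on the expected candidate cost and then use optimality, $V^s_{N,\bn}\le$ candidate cost, realisation by realisation wherever the candidate is feasible. I would flag explicitly any additional hypothesis needed, such as $\mbb{X}=\mbb{R}^n$ or realisation-wise invariance.
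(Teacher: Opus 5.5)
Your proposal does not reach a proof. Both steps you flag as obstacles are left open, and neither can be closed for the problem as posed, with open-loop $\mb{u}_N$ and constraints on the predicted mean. The paper's proof does not resolve them either: it only asserts that the shifted-candidate argument, with Assumptions~\ref{assump:P}--\ref{assump:Xf} and optimality, yields the decrease. That is the argument you set up, so there is no missing idea to import from it.

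Your bookkeeping is correct: the tower-property identity, the mean part being nonpositive by \eqref{eq:Plyap}, the remainder $\mytrace{(A^\top P_{\bn}A+Q-P_{\bn})\Sigma_N}$, and the shifted mean trajectory $\bar{x}_{i+1}+(\nu-\bn)A^iBu_0$. Moreover, $A^\top P_{\bn}A+Q-P_{\bn}\preceq 0$ would make $P_{\bn}$ a Lyapunov matrix for $A$ and force $\rho(A)<1$. Under Assumption~1 the remainder can therefore be positive. No choice of the appended input can offset the deviation $x_N-\bar{x}_N$, because that input is fixed before $\nu_1,\dots,\nu_{N-1}$ are realised.

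Feasibility fails as well, and the paper says so itself right after the theorem: $V^s_{N,\bn}$ at the successor may be undefined. Theorem~\ref{thm:D} and the $\abs{u}\le 5$ example exhibit a nonempty $\mc{D}\subset\mbb{X}_{N,\bn}$ from which both successors leave $\mbb{X}_{N,\bn}$. There the left side of \eqref{eq:stochlyap} is not even finite, and averaging or convexity over $\nu$ cannot help.

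The cost obstruction alone is already fatal, even where every constraint is inactive. Take $n=m=1$, $A=2$, $B=1$, $Q=R=1$, $\bn=0.8$, $K_{f,\bn}=-2$ and $P_{\bn}=21$; these satisfy \eqref{eq:Plyap2} with equality. Take $\mbb{X}_{f,\bn}=[-c,c]$ with $c$ small, $N=2$, and $\abs{x}$ small enough that the linear unconstrained minimiser is feasible at $x$ and at both successors.
\begin{itemize}
\item Computing the expectation directly gives input weights $\tilde{r}_1=\bn R+\bn(1-\bn)P_{\bn}=4.16$ and $\tilde{r}_0=\bn R+\bn(1-\bn)(Q+A^2P_{\bn})=14.4$.
\item The optimal mean cost-to-go from $\bar{x}_1$ is $\approx 20.85\,\bar{x}_1^2$.
\item Hence $\kappa^s_{2,\bn}(x)\approx -1.203\,x$ and $V^s_{2,\bn}(x)\approx 44.29\,x^2$.
\end{itemize}
The successor is $2x$ with probability $0.2$ and $\approx 0.797\,x$ with probability $0.8$. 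So $\mathbb{E}_\nu[(x^+)^2]\approx 1.31\,x^2$, and $\mathbb{E}_\nu[V^s_{2,\bn}(x^+)]\approx 1.31\,V^s_{2,\bn}(x)>V^s_{2,\bn}(x)$; the inequality fails. The mean successor is $\approx 1.04\,x$, so your fallback reading with the expected successor fails too.

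The remainder does vanish if the appended input is allowed to be the feedback $K_{f,\bn}x_N$. Then \eqref{eq:Plyap} applies at every realisation of $x_N$ and your identity closes. If, in addition, the shifted candidate is assumed feasible at every realised successor, the decrease follows. That, however, is a different (feedback-parametrised) problem, not $\mbb{P}^s_{N,\bn}$. The right conclusion from your analysis is that the statement must be changed, not that a cleverer candidate is needed.
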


\begin{proof}
The result follows directly from standard stochastic MPC arguments, \eg~\cite{KCbook}. Specifically, since the terminal cost and set satisfy the average-dynamics Lyapunov conditions in Assumptions~\ref{assump:P} and \ref{assump:Xf}, the optimality of the MPC problem implies that the expected value function decreases along the closed-loop trajectories.
\end{proof}

Although Theorem~\ref{lem:stoch} establishes that the value function decreases in expectation along the closed-loop trajectories, it does \emph{not} imply recursive feasibility of the controller.
In particular, feasibility of problem $\mathbb{P}_{N,\bn}^s(x_k)$ is not guaranteed to be preserved under the realized multiplicative-noise dynamics, and it is indeed possible that $Ax + \nu B \kappa^s_{N,\bn}(x)$ leaves the set $\mbb{X}_{N,\bar{\nu}}$---and so $V_{N,\bn}^s(Ax + \nu B \kappa^s_{N,\bn}(x))$ is then undefined---with non-zero probability.

\subsection{Terminally unconstrained stochastic MPC}

The role of the terminal constraint is to induce stability and recursive feasibility. As just discussed, however, when input losses are present, terminal sets designed to be invariant for the expected dynamics do not ensure feasibility or descent of the value function along realized trajectories. To isolate the effect of the stochastic dynamics from feasibility artifacts induced by terminal constraints, we therefore now consider a formulation, $\mbb{P}^u_{N,\bar{\nu}}(x)$ at a state $x=x_k$, that omits the terminal constraint:
\begin{equation}
    V^u_{N,\bn}(x) = \min_{\mb{u}_N} \Eb{V^\beta_N(x,\mb{u}_N, \{\nu_i\}) \given x},
\end{equation}
subject to, for $i = 0,\dots,N-1$,
\begin{subequations}
    \begin{align}
        x_0 &= x,  \\
        x_{i+1} &= Ax_i + \nu_i B u_i, \\  
        u_i &\in \mbb{U},\label{eq:input}\\
        x_i &\in \mbb{X},
    \end{align}
\end{subequations}
where now
\begin{equation}
    V^\beta_N(x,\mb{u}_N,\{\nu_i\} ) \coloneqq \beta V_{f,\bn}(x_N) + \sum_{i=0}^{N-1} \ell(x_i,\nu_i u_i).
\end{equation}
Taking the expectation over $\{\nu_{k+i}\}_{i=0}^{N-1}$, the prediction model is still the average dynamics~\eqref{eq:ave_dyn} but now there is no requirement that $\bar{x}_N \in \mbb{X}_{f,\bar{\nu}}$. An equivalent form of $\mbb{P}^u_{N,\bn}(x)$ (in the sense of having the same minimizer and value) is therefore $\bar{\mbb{P}}^u_{N,\bar{\nu}}(x)$, defined as
\begin{multline}\label{eq:smpc_cost2}
{V}^u_{N,\bn} (x) = \min_{\mb{u}_N} \beta V_{f,\bn} (\bar{x}_N) + \beta \mytrace{P_{\bn}\Sigma_N} \\ + \sum_{i=0}^{N-1} \ell(\bar{x}_i,\sqrt{\bar{\nu}} u_i)  + \mytrace{Q\Sigma_i},
\end{multline}
subject to, for $i = 0,\dots,N-1$,
\begin{subequations}
    \begin{align}
        \bar{x}_0 &= x, \label{eq:uunit}\\
        \Sigma_0 &= 0, \\
        \bar{x}_{i+1} &= A\bar{x}_i + \bar{\nu} B u_i, \label{eq:udyn}\\
        \Sigma_{i+1} &= A \Sigma_i A^\top + \bar{\nu}(1-\bar{\nu})Bu_i u_i^\top B^\top,\\
        u_i &\in \mbb{U},\label{eq:uconsmy}\\
        x_i &\in \mbb{X}.\label{eq:xconsmy}
    \end{align}
\end{subequations}
In these problems, $\beta \geq 1$ is an additional weighting applied to the terminal cost; it is well known~\cite{LAS+06,RR+17} that, in a deterministic setting, stability of the closed-loop system can be guaranteed despite the omission of a terminal constraint by imposing a weighted terminal cost function $\beta V_{f,\bar{\nu}}, \beta \geq 1$; in our setting, this means that $\beta V_{f,\bn}(x) = x^\top (\beta P_{\bn}) x$ satisfies
\begin{multline}~\label{eq:ly2}
    (A+\bar{\nu} BK_{f,\bar{\nu}})^\top (\beta P_{\bn}) (A+\bar{\nu} B K_{f,\bar{\nu}}) - \beta P_{\bn}  \preceq \\- \left[ Q + \bar{\nu} K_{f,\bar{\nu}}^\top R K_{f,\bar{\nu}} + \bar{\nu}(1-\bar{\nu})K_{f,\bar{\nu}}^\top B^\top (\beta P_{\bn}) BK_{f,\bar{\nu}}\right].  
\end{multline}
It is easily shown that satisfaction of this inequality with $\beta = 1$, \ie~\eqref{eq:Plyap2}, ensures satisfaction for all $\beta > 1$.

Denote the control law resulting from this problem as
\begin{equation}
    u = {\kappa}^u_{N,\bar{\nu}}(x) = u^u_0(x),
\end{equation}
where $u^u_0(x)$ is the first control in the optimal sequence obtained by solving  ${\mbb{P}}^u_{N,\bn}(x)$  or, equivalently, $\bar{\mbb{P}}^u_{N,\bar{\nu}}(x)$. This variant of the problem then inherits a Lyapunov decrease similar to~\eqref{eq:stochlyap} for states in a subset $\Gamma_{N,\bar{\nu}}$ of the state space. In particular, define
\begin{equation}
    \Gamma_{N,\bar{\nu}} \coloneqq \{ x: {V}_{N,\bn}^{u}(x) \leq Nd_{\bar{\nu}} + \beta c_{\bar{\nu}} \},
\end{equation}
where $c_{\bar{\nu}} > 0$ and $d_{\bar{\nu}} > 0$ are constants such that
\begin{equation}\label{eq:cconst}
    \Omega_{\bar{\nu}} \coloneqq \{ x : V_{f,\bn}(x) \leq c_{\bar{\nu}} \} \subseteq \mbb{X}_{f,\bar{\nu}},
\end{equation}
and
\begin{equation}\label{eq:dconst}
    (\forall x \not\in \Omega_{\bar{\nu}}, u \in \mbb{U}) \ \ell(x,u)  > d_{\bar{\nu}}.
\end{equation}
\begin{thm}
    The value function ${V}^u_{N}(x)$ satisfies, for all $x \in \Gamma_{N,\bar{\nu}}$,
    \begin{equation}\label{eq:stochlyap1}
        \Ebo{ V_N^u(Ax + \nu B \kappa^u_{N,\bn}(x)) + \ell(x, \nu\kappa^u_{N,\bn}(x)) \given x}\leq V_{N,\bn}^u(x). 
    \end{equation}
    Therefore, for all $x \in \Gamma_{N,\bar{\nu}}$,
    \begin{equation}
        Ax + \bar{\nu}B{\kappa}^u_{N,\bar{\nu}}(x) \in \Gamma_{N,\bar{\nu}}.
    \end{equation}
\end{thm}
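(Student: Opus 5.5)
The argument adapts the standard proof for MPC without terminal constraints (Limon et al.~\cite{LAS+06}, Rawlings et al.~\cite{RR+17}) to the expected-dynamics problem $\bar{\mbb{P}}^u_{N,\bn}(x)$. The plan has four steps: show the optimal predicted terminal mean lies in $\Omega_{\bn}$; shift the optimal sequence and append the terminal feedback; bound the expected successor cost; and deduce invariance of $\Gamma_{N,\bn}$.

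\emph{Step 1 (terminal mean in $\Omega_{\bn}$).} Fix $x \in \Gamma_{N,\bn}$ with optimal sequence $\mb{u}^*_N$, predicted means $\bar{x}_i^*$ and covariances $\Sigma_i^*$. Suppose, for contradiction, that $\bar{x}_N^* \notin \Omega_{\bn}$. If some $\bar{x}_i^*$, $i<N$, lay in $\Omega_{\bn}$, I would argue as in the deterministic case: $\Omega_{\bn}$ is a sublevel set of $V_{f,\bn}$ and a subset of $\mbb{X}_{f,\bn}$, and~\eqref{eq:ly2} makes $\beta V_{f,\bn}$ a control Lyapunov function for the mean dynamics under $K_{f,\bn}$. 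So the tail could be replaced by $K_{f,\bn}$ without increasing cost, and the optimal mean trajectory stays in $\Omega_{\bn}$ once it enters. Hence all $\bar{x}_i^*\notin\Omega_{\bn}$. Then~\eqref{eq:dconst} gives $\ell(\bar{x}_i^*,\sqrt{\bn}u_i^*)>d_{\bn}$ for each $i$, and $\beta V_{f,\bn}(\bar{x}_N^*)>\beta c_{\bn}$. Since the trace terms are nonnegative, $V^u_{N,\bn}(x) > Nd_{\bn}+\beta c_{\bn}$, a contradiction. So $\bar{x}_N^*\in\Omega_{\bn}\subseteq\mbb{X}_{f,\bn}$.

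\emph{Step 2 (candidate sequence).} For each realization of $\nu_0$, the successor state is $x^+=Ax+\nu_0Bu_0^*$. I take the candidate $\tilde{\mb{u}}=\{u_1^*,\dots,u_{N-1}^*,K_{f,\bn}\tilde{x}_N\}$, where $\tilde{x}_N$ denotes the predicted state $N-1$ steps ahead of $x^+$. The cleanest form uses state feedback on the stochastic predicted state, so the candidate is a policy rather than an open-loop sequence. If the formulation only allows open-loop sequences, I would instead use $K_{f,\bn}$ applied to the mean, accepting extra variance terms. Averaging over $\nu_0$, the conditional means and covariances of the candidate predictions reproduce $\bar{x}^*_{i+1},\Sigma^*_{i+1}$. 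This holds because the original problem is exactly the expectation over $\{\nu_i\}$, so the tower property applies.

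\emph{Step 3 (cost bound).} The expected candidate cost equals $V^u_{N,\bn}(x)-\ell(x,\nu_0 u_0^*)$ in expectation, plus the change $\Eb{\beta V_{f,\bn}(A\tilde{x}_N+\nu K_{f,\bn}\tilde{x}_N)+\ell(\tilde{x}_N,\nu K_{f,\bn}\tilde{x}_N)} - \Eb{\beta V_{f,\bn}(\tilde{x}_N)}$. By~\eqref{eq:ly2}, applied pointwise (as in~\eqref{eq:Plyap}), this change is $\le 0$ for every $\tilde{x}_N\in\mbb{R}^n$. This is where the unconstrained terminal step helps: the inequality is global and quadratic, so variance is handled automatically. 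Optimality gives $\Eb{V^u_{N,\bn}(x^+)}\le$ candidate cost, which yields~\eqref{eq:stochlyap1}.

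\emph{Step 4 (invariance of the mean successor).} Since $\ell\ge0$, we get $\Eb{V^u(x^+)}\le V^u(x)\le Nd_{\bn}+\beta c_{\bn}$. For the mean successor $\bar{x}^+=Ax+\bn Bu_0^*$, I would show $V^u(\bar{x}^+)\le \Eb{V^u(x^+)}$. The natural tool is convexity of $V^u_{N,\bn}$: the problem is a convex QP in $(x,\mb{u})$, since the trace terms are quadratic in $u$ and the constraints are convex. Jensen then gives $V^u(\bar{x}^+)\le\Eb{V^u(x^+)}$, so $\bar{x}^+\in\Gamma_{N,\bn}$.

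\emph{Main obstacle.} Step 1 and the feasibility of $\tilde{\mb{u}}$ are the hard parts. The state constraints in~\eqref{eq:xconsmy} apply to $x_i$, and I interpret them as applying to the means. The appended input $K_{f,\bn}\bar{x}_N^*\in\mbb{U}$ requires $\bar{x}_N^*\in\mbb{X}_{f,\bn}$, which is exactly why Step 1 is needed. Step 1 itself relies on the deterministic "stay in $\Omega$" argument, which must be checked to survive the added variance terms.
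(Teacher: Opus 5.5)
\textbf{Gap in Step~3.} The claim that \eqref{eq:ly2}, applied pointwise, handles the variance ``automatically'' only holds if the appended input is $K_{f,\bn}\tilde x_N$ with $\tilde x_N$ the \emph{realized} predicted state, i.e.\ a feedback policy. But $\mbb{P}^u_{N,\bn}(x^+)$ optimizes over open-loop sequences $\mb{u}_N$. A feedback policy can cost strictly less than the open-loop optimum, so its cost is not an upper bound on $V^u_{N,\bn}(x^+)$.

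You flag the fallback, but it does not close. Take the admissible candidate $v=K_{f,\bn}\bar{\tilde x}_N$, with $\bar{\tilde x}_N=\E[x_N\mid\nu_0]$ and $\tilde\Sigma=\Cov[x_N\mid\nu_0]$. The appended step changes the expected cost by
\[
\bar{\tilde x}_N^\top M\,\bar{\tilde x}_N+\operatorname{tr}\bigl((Q+\beta(A^\top P_{\bn}A-P_{\bn}))\tilde\Sigma\bigr),
\]
where $M\preceq 0$ is the matrix in \eqref{eq:ly2}. The second term is the variance injected by $u_1^*,\dots,u_{N-1}^*$ propagating one extra step through the open-loop $A$. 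It cannot be controlled by \eqref{eq:ly2}: $Q+\beta(A^\top P_{\bn}A-P_{\bn})\preceq 0$ would make $A$ Schur, contradicting $\rho(A)>1$. For the data of Section~\ref{sec:illus}, for instance, $B^\top(Q+A^\top P_{\bn}A-P_{\bn})B\approx 23.8>0$. This is the same growth of the weights $\tilde R^\beta_{N,i}$ with remaining horizon that drives Theorem~\ref{thm:mono}, so ``accepting extra variance terms'' is exactly where the descent is lost.

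The same mechanism defeats the ``stay in $\Omega_{\bn}$'' argument you import in Step~1. The terminal-controller tail, applied open loop to the means, is penalized by $\tilde R^\beta_{N,j}$ rather than by the weight $\bn R+\bn(1-\bn)\beta B^\top P_{\bn}B$ in \eqref{eq:ly2]}. Hence \eqref{eq:ly2} no longer telescopes to the bound $\beta V_{f,\bn}(\bar x_i^*)$, and \cite[Lemma~2.40]{RR+17} does not transfer as is.

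\textbf{Gap in Step~2.} The tower property lets you average \emph{costs} over $\nu_0$, but feasibility must hold separately for $\nu_0=0$ and for $\nu_0=1$. From $x^+=Ax+\nu_0Bu_0^*$, the candidate's predicted means are the conditional means $\bar x_{i+1}^*+(\nu_0-\bn)A^iBu_0^*$, not $\bar x_{i+1}^*$. The constraints of $\mbb{P}^u_{N,\bn}(x)$ and your Step~1 only concern the unconditional means. So neither the mean state constraints nor $K_{f,\bn}\bar{\tilde x}_N\in\mbb{U}$ follow. Already at $i=0$, only the convex combination $Ax+\bn Bu_0^*$ is known to lie in $\mbb{X}$, so $V^u_{N,\bn}(x^+)$ may be undefined for one realization. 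This is the phenomenon behind $\mc{D}$ in Theorem~\ref{thm:D}.

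\textbf{Comparison with the paper.} The paper uses the same skeleton: terminal mean in $\mbb{X}_{f,\bn}$ via \cite[Lemma~2.40]{RR+17}, then the shifted sequence plus $K_{f,\bn}\bar x^u_N$. It places the tail at the expected successor $Ax+\bn Bu^u_0(x)$, which avoids the per-realization feasibility problem. Even granting its cost bound, however, that controls $V^u_{N,\bn}$ only at the mean successor, which by Jensen is weaker than \eqref{eq:stochlyap1}. It also asserts the cost bound without treating the weight-growth term above, so it does not supply your missing step.

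Your Step~4 (Jensen through convexity of $V^u_{N,\bn}$) is the right bridge from \eqref{eq:stochlyap1} to the mean-successor claim, and the paper's ``sublevel set'' sentence tacitly needs it. It does, however, require $\mbb{U}$ and $\mbb{X}$ to be convex, which the paper's assumptions (compactness and closedness only) do not give.
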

\begin{proof}
    It is easily shown that problem~$\bar{\mbb{P}}^u_{N,\bar{\nu}}(x)$ is equivalent to a deterministic problem in which the variance terms are eliminated and replaced by time-varying input weights:
    \begin{equation}
    {V}^u_{N} (x) = \min_{\mb{u}_N} \beta \bar{x}_N^\top P_{\bn} \bar{x}_N + \sum_{i=0}^{N-1} \bar{x}_i^\top Q \bar{x}_i + \bar{u}_i^\top \tilde{R}^\beta_{N,i} \bar{u}_i
    \end{equation}
    subject to~\eqref{eq:uunit}, \eqref{eq:udyn},~\eqref{eq:uconsmy} and~\eqref{eq:xconsmy}, where the input penalty
    \begin{equation}\label{eq:Rt}
        \tilde{R}^\beta_{N,i} \coloneqq \bar{\nu} R + \Delta^\beta_{N,i},
    \end{equation}
    with, for $i=0,\dots,N-1$,
    \begin{equation}
        \Delta^\beta_{N,i} \coloneqq \bar{\nu}(1-\bar{\nu}) B^\top S^\beta_{N,i} B,
    \end{equation}
    and
    \begin{equation}\label{eq:St}
        S^\beta_{N,i} \coloneqq \beta(A^{N-1-i})^\top P_{\bn} A^{N-1-i} + \sum_{j=0}^{N-1-i} (A^j)^\top Q A^j.
    \end{equation}
    Consider any $x \in \Gamma_{N,\bar{\nu}}$ and the optimal $\mb{u}_{N,\bn}^u(x)$. Since $\tilde{R}_i \succ 0$ for $i=0,\dots,N-1$, the optimal stage cost $(\bar{x}_i^u)^\top Q \bar{x}_i^u + ({u}_i^u)^\top \tilde{R}^\beta_{N,i} {u}_i^u \geq (\bar{x}_i^u)^\top Q \bar{x}_i^u$ for all $i=0,\dots,N-1$; moreover, $(\bar{x}_N^u)^\top P_{\bn} \bar{x}_N^u \geq (\bar{x}_N^u)^\top Q \bar{x}_N^u$. However, since $x \in \Gamma_{N,\bar{\nu}}$ then $V_{N,\bn}^u(x) \leq N d_{\bar{\nu}} + \beta c_{\bar{\nu}}$ and therefore, using the arguments from~\cite[Lemma 2.40]{RR+17}, it must hold that $\bar{x}_N^u \in \mathbb X_{f,\bar\nu}$.

    Now consider the standard tail control sequence
    \begin{equation}
        \tilde{\mb{u}}^u_N =
\{{u}_1^u(x),\dots,u_{N-1}^u(x),K_{f,\bar\nu}\bar{x}_N^u(x)\}.
    \end{equation}
Because $\bar{x}_N^u \in \mathbb X_{f,\bar\nu}$ and $\mathbb X_{f,\bar\nu}$ is positively invariant under the terminal control law, this sequence is feasible for the problem at the expected successor state $Ax+\bn B u^u_0(x)$. Evaluating the cost of this feasible sequence and using the terminal Lyapunov inequality yields~\eqref{eq:stochlyap1}. Finally, since $V_{N,\bn}^u(x_{k+1}) \le V_{N,\bn}^u(x_k)$ in expectation and
$\Gamma_{N,\bar\nu}$ is a sublevel set of $V_{N,\bn}^u$, it follows that $Ax + \bar\nu B \kappa^u_{N,\bar\nu}(x) \in \Gamma_{N,\bar\nu}$.
\end{proof}

\subsection{Optimistic MPC}

Finally, we define \emph{optimistic} variants of the preceding problems as follows. Let 
${\mbb{P}}^s_{N,1}(x)$ be a counterpart to problem ${\mbb{P}}^s_{N,\bar{\nu}}(x)$, employing the nominal system dynamics in the prediction model:
\begin{equation}
{\mbb{P}}^s_{N,1}(x)\colon    {V}^s_{N,1} (x) = \min_{\mb{u}_N} V_{f,1} ({x}_N) + \sum_{i=0}^{N-1} \ell({x}_i,u_i),
\end{equation}
subject to, for $i = 0,\dots,N-1$,
\begin{subequations}
    \begin{align}
        {x}_0 &= x, \\
        {x}_{i+1} &= A{x}_i + B u_i,\\
        u_i &\in \mbb{U},\\
        x_i &\in \mbb{X},\\
        x_N &\in \mbb{X}_{f,1}.
    \end{align}
\end{subequations}

This is a certainty equivalent formulation of problem ${\mathbb{P}}^s_{N,\bar{\nu}}(x)$ using $\bar{\nu}=1$, and therefore just a conventional constrained MPC formulation. This optimistic certainty is carried through to the design of the terminal cost; in the previous section, we designed a terminal cost using the average dynamics, which provides an expected Lyapunov decrease of the terminal cost, but does not provide \emph{pathwise} guarantees. Consequently, in this section we propose the use of the optimistic terminal cost satisfying the following assumption.
\begin{assump}\label{assump:P2}
    The terminal cost function $V_{f,1}$ satisfies, for all $x \in \mbb{R}^n$,
    \begin{equation}\label{eq:Plyapo}
        V_{f,1}(Ax + B K_{f,1}) + \ell(x, K_{f,1} x) \leq V_{f,1}(x) ,
    \end{equation}
    for some $K_{f,1}$ that stabilizes $(A,B)$. 
\end{assump}
The particular instance of~\eqref{eq:Plyapo} for $V_{f,1}(x)=x^\top P_1 x$ is
\begin{equation}\label{eq:Plyap3}
    (A+ BK_{f,1})^\top P_1 (A+ B K_{f,1}) + Q +  K_{f,1}^\top R K_{f,1} \preceq P_1 .
\end{equation}
Likewise, $\mbb{X}_{f,1}$ is a set satisfying the following assumption.
\begin{assump}\label{assump:Xf1}
    The set $\mbb{X}_{f,1}$ has $0 \in \interior{\mbb{X}_{f,1}}$ and satisfies, for all $x \in \mbb{X}_{f,1}$,
    \begin{align}
    K_{f,1} x \in \mbb{U} \ \text{and} \ x &\in \mbb{X},\\
    (A + BK_{f,1})x &\in \mbb{X}_{f,1}. 
    \end{align}
\end{assump}
That is, $V_{f,1}$ and $\mbb{X}_{f,1}$ are standard terminal ingredients designed for the deterministic, optimistic dynamics under the stabilizing terminal control law $K_{f,1}$. Crucially,~\eqref{eq:Plyap3} is the standard Lyapunov inequality, omitting the additional variance term that appears in~\eqref{eq:Plyap2}. Thus,~\eqref{eq:Plyap3} has a solution for \emph{all} $\bar{\nu} \in (0,1]$ and not just  $\bar{\nu} > \nu_c$ because it is, in fact, independent of $\bar{\nu}$. Whilst this might seem at odds with the probabilistic setting of the paper and stochastic nature of the system, it should be noted that it makes the MPC synthesis independent of $\bar{\nu}$ and more generally the random variable $\nu$. Nonetheless, the behaviour of the closed-loop system remains stochastic. In fact, as we shall see, the probabilistic stability properties of the closed-loop system depend critically on whether the terminal cost is matched to the realized dynamics.

We may also define a terminally unconstrained version of this problem, ${\mbb{P}}^u_{N,1}(x)$, in which the terminal constraint is omitted and the terminal cost is weighted by $\beta \geq 1$; this is an optimistic counterpart to ${\mathbb{P}}^u_{N,\bar{\nu}}(x)$ using $\bar{\nu}=1$. In both problems ${\mbb{P}}^s_{N,1}(x)$ and ${\mbb{P}}^u_{N,1}(x)$, the variance associated with the predicted trajectory is entirely neglected and, thus, the input weight at each prediction step $i$ is just $R$ and not $\tilde{R}_i$.

Denote the control law associated with ${\mbb{P}}^s_{N,1}(x)$ as
\begin{equation}
    u = \kappa^s_{N,1}(x) \coloneqq u^s_0(x),
\end{equation}
and likewise denote the control law associated with ${\mbb{P}}^u_{N,1}(x)$ as
\begin{equation}
    u = \kappa^u_{N,1}(x) \coloneqq u^u_0(x).
\end{equation}
The region of feasibility for $\bar{\mbb{P}}^s_{N,1}(x)$ is $\mbb{X}_{N,1}$ and it is well known that, for a state $x$ within this,
\begin{equation}
    {V}^s_{N,1} (Ax+B\kappa^s_{N,1}(x)) - {V}^s_{N,1}(x) \leq - \ell(x,\kappa^s_{N,1}(x)).
\end{equation}
Likewise, define a set $\Gamma_{N,1}$ as a counterpart to $\Gamma_{N,\bar{\nu}}$, namely 
\begin{equation}
    \Gamma_{N,1} \coloneqq \{ x: \bar{V}_{N,1}^{u}(x) \leq N d_{1} + \beta c_{1}\},
\end{equation}
where $c_{1} > 0$ and $d_{1} > 0$ are constants such that
\begin{equation}
    \Omega_{1} \coloneqq \{ x : V_{f,1}(x) \leq c_{1} \} \subseteq \mbb{X}_{f,1},
\end{equation}
and
\begin{equation}
    (\forall x \not\in \Omega_{1}, u \in \mbb{U}) \ \ell(x,u) > d_{1},
\end{equation}
Then, for all $x \in \Gamma_{N,1} $,
\begin{equation}
    {V}^u_{N,1} (Ax+B\kappa^u_{N,1}(x)) - {V}^u_{N,1}(x) \leq - \ell(x,\kappa^u_{N,1}(x)).
\end{equation}

These are standard results for deterministic MPC~\cite{RR+17,LAS+06}. Our aim in the remainder of the paper is, however, to analyse and characterize properties of, and relations between, the sets $\mbb{X}_{N,1}$, $\mbb{X}_{N,\bn}$, $\Gamma_{N,1}$ and $\Gamma_{N,\bn}$.

\section{Main results}
\label{sec:main}

\subsection{Monotonicity of the value function}

It is well known that in deterministic MPC with terminal ingredients the value function has the desirable property of monotonicity in horizon length: for all $x \in \mbb{X}_N$, $V_{N+1}(x) \leq V_N(x)$. Our first result is to show that the same property does not necessarily hold, in expectation, for stochastic MPC of linear systems with multiplicative binary input uncertainty.
\begin{thm}\label{thm:mono}
    There is no guaranteed ordering between $V^s_{N,\bar{\nu}}(x)$ and $V^s_{N-1,\bar{\nu}}(x)$.
\end{thm}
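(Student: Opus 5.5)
The approach is to show that the standard monotonicity argument breaks at a specific term, and then to exhibit scalar examples with both orderings. Recall the deterministic proof. Take the optimal sequence of $\bar{\mbb{P}}^s_{N-1,\bn}(x)$, append $u_{N-1}=K_{f,\bn}\bar{x}_{N-1}$, and use the terminal inequality to conclude $V^s_{N,\bn}(x)\le V^s_{N-1,\bn}(x)$. I will redo this computation in the mean--covariance form~\eqref{eq:smpc_cost3}.

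For the appended sequence, the mean part of the cost difference and the variance generated by the appended input are exactly those controlled by~\eqref{eq:Plyap2}. Hence that part is nonpositive, as in Assumption~\ref{assump:P}. However, the covariance $\Sigma_{N-1}$ already accumulated over the first $N-1$ steps is now propagated once more through $A$. It therefore contributes
\begin{equation*}
\mytrace{(A^\top P_{\bn} A + Q - P_{\bn})\Sigma_{N-1}}
\end{equation*}
to the cost difference. This term is not controlled by any assumption. Because $\rho(A)>1$ and $P_{\bn}$ is only required to decrease along $A+\bn BK_{f,\bn}$, the matrix $A^\top P_{\bn}A+Q-P_{\bn}$ has a positive direction. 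In the scalar case with $A=a>1$, it is simply the positive number $a^2P_{\bn}+Q-P_{\bn}$. The ``extended'' candidate thus has cost $V^s_{N-1,\bn}(x)$ plus a positive variance penalty. The upper bound $V^s_{N,\bn}\le V^s_{N-1,\bn}$ is therefore lost. Equivalently, in the form with weights $\tilde R^\beta_{N,i}$ and $S_{N,i}$ from~\eqref{eq:St}, lengthening the horizon raises every earlier input weight $\Delta_{N,i}$, by $A$-powers of $P_{\bn}$ and $Q$. The longer-horizon problem therefore penalizes early inputs more heavily.

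Losing the bound is not yet a counterexample, so the next step is an explicit scalar instance: $n=m=1$, $A=a>1$, $B=1$, a box $\mbb{U}$, a large $\mbb{X}$, and $\bn\in(\nu_c,1)$. Then $P_{\bn}$, $K_{f,\bn}$ and an interval $\mbb{X}_{f,\bn}$ follow from~\eqref{eq:Plyap2} and~\eqref{eq:termnu}.

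To obtain $V^s_{N,\bn}(x)>V^s_{N-1,\bn}(x)$, I will take $N-1=1$ and $N=2$, with $x$ chosen so that both problems are feasible. Without active constraints, both problems are scalar quadratic programs with closed-form solutions. For $N=1$ the value is $x^2(Q+p_1)$ with an explicit $p_1$. For $N=2$ one gets a two-step Riccati-like recursion in which $\tilde R_{2,0}$ contains the extra factor $a^2P_{\bn}+Q$. I will choose $a$ large and $\bn$ just above $\nu_c$, so that $1-\bn$ is sizable, and check numerically that the $N=2$ coefficient exceeds the $N=1$ coefficient. If the unconstrained coefficients happen to be ordered the ``right'' way, I would instead let input constraints bind: with a saturated first input, the longer horizon forces more input, hence more variance, before reaching $\mbb{X}_{f,\bn}$.

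For the opposite ordering, I will note that with $\bn=1$ the variance vanishes and the standard argument gives $V^s_{N,1}\le V^s_{N-1,1}$, strictly for suitable $x$. By continuity in $\bn$, the same strict ordering persists for $\bn$ close to $1$, with $\nu_c<\bn<1$. Together the two examples show that neither inequality holds in general, which is the statement of Theorem~\ref{thm:mono}.

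The main obstacle is the first example. I need parameters for which the extra variance penalty dominates the benefit of the additional degree of freedom at the optimum, not merely for the tail candidate. This requires the explicit two-horizon closed-form comparison, or a numerical check, rather than a structural argument.
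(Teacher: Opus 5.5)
Your proposal is correct. Its first half is the paper's proof. You extend the optimal $(N-1)$-horizon sequence by $K_{f,\bn}\bar x_{N-1}$ and let \eqref{eq:Plyap2} absorb the mean part and the variance created by the appended input. You then note that the covariance already accumulated in $\Sigma_{N-1}$ leaves an uncontrolled residual. The paper stops there and leaves an actual increase to the numerical instance in Table~\ref{tab:values}. You go on to exhibit both orderings, which is what ``no guaranteed ordering'' strictly requires, so your route turns ``the certificate is lost'' into a proof. Your residual $\mytrace{(A^\top P_{\bn}A+Q-P_{\bn})\Sigma_{N-1}}$ plays the role of the paper's $\sum_{i=0}^{N-2}u_i^\top[\tilde R^1_{N,i}-\tilde R^1_{N-1,i}]u_i$, and your bookkeeping is the cleaner of the two. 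Expanding \eqref{eq:smpc_cost3} directly, $u_i$ first enters $\Sigma_{i+1}$ and $\mytrace{Q\Sigma_i}$ is summed only to $i=N-1$. So the $Q$-sum in \eqref{eq:St} should stop at $N-2-i$. With that correction, the paper's extra term $\bn(1-\bn)\bar x_{N-1}^\top K_{f,\bn}^\top B^\top QBK_{f,\bn}\bar x_{N-1}$ does not arise.

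The step you flag as the main obstacle is easy. In your scalar setting ($B=1$), take $P_{\bn}$ to be the stabilizing MARE solution with its optimal gain, so that $P_{\bn}=Q+a^2P_{\bn}-a^2\bn P_{\bn}^2/(\bn P_{\bn}+R+(1-\bn)P_{\bn})$. For $|x|$ small all constraints are inactive and $V^s_{1,\bn}(x)=P_{\bn}x^2$, while
\begin{equation*}
V^s_{2,\bn}(x)=\Bigl(Q+a^2P_{\bn}-\frac{a^2\bn P_{\bn}^2}{\bn P_{\bn}+R+(1-\bn)(a^2P_{\bn}+Q)}\Bigr)x^2>P_{\bn}x^2
\end{equation*}
for every $a>1$ and $\bn\in(\nu_c,1)$, since $a^2P_{\bn}+Q>P_{\bn}$. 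No tuning of $a$ or $\bn$ and no binding constraints are needed.

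For the reverse ordering, the DARE solution at $\bn=1$ gives equality near the origin, not strict decrease. So fix a pair $(K,P)$ satisfying the $\bn=1$ inequality strictly. The same pair then satisfies \eqref{eq:Plyap2} for $\bn$ near $1$, which is what makes your continuity step legitimate.

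One minor overstatement: for $n>1$ the matrix $A^\top P_{\bn}A+Q-P_{\bn}$ is generally indefinite, so the weight increments need not be positive semidefinite. ``Raises every earlier input weight'' holds only in the scalar case, which is all you use.
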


\begin{proof}
    Consider any $x \in \mbb{X}_{N-1,\bn}$ and denote the optimal solution to $\mbb{P}^s_{N-1,\bn}(x)$ as 
    \begin{equation}
        \mb{u}_{N-1,\bn}^s(x) = \{ u^s_{\bn}(0),\dots,u^s_{\bn}(N-2)\}.
    \end{equation}
    with cost
    \begin{multline}
        V^s_{N-1,\bn}(x) = (\bar{x}^s_{N-1})^\top P_{\bn} \bar{x}^s_{N-1} \\+ \sum_{i=0}^{N-2} (\bar{x}^s_i)^\top Q \bar{x}^s_i + (\bar{u}^s_i)^\top \tilde{R}^1_{N-1,i} \bar{u}^s_i.
    \end{multline}
    Since $\bar{x}^s_{N-1}(x) \in \mbb{X}_{f,\bn}$, a feasible solution to $\mbb{P}^s_{N,\bn}(x)$ is
     \begin{equation}
        \tilde{\mb{u}}_{N,\bn}^s(x) = \{ u^s_{\bn}(0),\dots,u^s_{\bn}(N-2),K_{f,\bn}x^s_{N-1} \},
    \end{equation}
    with cost
\begin{equation*}
\begin{split}
\tilde{V}_N^s(x) &= (\bar{x}^s_{N-1})^\top(A+\bn BK_{f,\bn})^\top  P_{\bn} (A+\bn BK_{f,\bn}) \bar{x}^s_{N-1} \\ &\quad + (\bar{x}^s_{N-1})^\top (Q + K_{f,\bar{\nu}}^\top \tilde{R}^1_{N,N-1} K_{f,\bn} ) \bar{x}^s_{N-1}  \\&\quad + \sum_{i=0}^{N-2} (\bar{x}^s_i)^\top Q \bar{x}^s_i + (\bar{u}^s_i)^\top \tilde{R}^1_{N,i} \bar{u}^s_i.
\end{split}
\end{equation*}
Now, using~\eqref{eq:Rt}--\eqref{eq:St} with $\beta = 1$, 
\begin{equation}
    \tilde{R}^1_{N,N-1} = \bar{\nu} R + \bn(1-\bn)B^\top(P_{\bn} + Q)B.
\end{equation}
Hence, using~\eqref{eq:Plyap},
\begin{equation*}
    \begin{split}
\tilde{V}_N^s(x) &\leq V^s_{N-1,\bn}(x) +  \sum_{i=0}^{N-2}  (\bar{u}^s_i)^\top [\tilde{R}^1_{N,i} - \tilde{R}^1_{N-1,i}] \bar{u}^s_i \\ &\quad + \bn(1-\bn)(\bar{x}^s_{N-1})^\top K_{f,\bar{\nu}}^\top B^\top Q B K_{f,\bn}  \bar{x}^s_{N-1}, 
\end{split}
\end{equation*}
which, even though $ V^s_{N,\bn}(x) \leq \tilde{V}_N^s(x)$, does not guarantee that $\tilde{V}_N^s(x) \leq \tilde{V}_{N-1}^s(x)$.
\end{proof}
The result here clearly exposes how the increase in the weights $\tilde{R}^1_{N,i}$ and the additional terminal stage term dependent on $Q$ may increase $V^s_{N,\bn}(x)$ compared to $V^s_{N,\bn}(x)$. In fact, 
\begin{multline}
\tilde{R}^1_{N,i} - \tilde{R}^1_{N-1,i} 
= \bar{\nu}(1-\bar{\nu}) B^\top 
\left[ (A^{N-1-i})^\top Q A^{N-1-i} + \right. \\ \left.  (A^{N-2-i})^\top (A^\top P_{\bar{\nu}} A - P_{\bar{\nu}}) A^{N-2-i} \right] B.
\end{multline}
which, for unstable $A$, is generally largest (in the sense of induced norm or eigenvalues) for $i=0$, and scales exponentially with $N$. Thus, $V_{N,\bn}^s(x)$ can increase substantially with horizon length, as a consequence of the backwards propagation of the variance terms over a longer horizon. An example of this is given in Sec.~\ref{sec:illus}.

\subsection{Conservatism and ordering of $\Gamma_{N,\bar{\nu}}$}

The set $\Gamma_{N,\bar{\nu}}$ provides a value function-based inner approximation to  $\mathbb{X}_{N,\bar{\nu}}$, the expected region of attraction for the controller incorporating terminal constraints. In deterministic MPC, the monotonicity property of the value function means that $\Gamma_{N+1,1} \supseteq \Gamma_{N,1}$. Therefore, $\Gamma_{N,1}$ provides a progressively improving inner estimate of $\mbb{X}_{N,1}$.

In general, the estimate that $\Gamma_{N,\bar{\nu}}$ provides is inherently conservative since the lower bound on the stage cost, ${d}_{\bar{\nu}}$, which has to hold for all $u \in \mbb{U}$, typically ignores the contribution of the input terms to the whole cost.  An interesting consequence of~Theorem~\ref{thm:mono} is that the estimate may also be \emph{deteriorating} rather than improving with horizon length. To aid the statement of the following, define $U^s_{N,\bar{\nu}}(x)$ as the minimum possible expected input cost associated with transferring the averaged system to $\Omega_{\bn}$ in $N$ steps:
\begin{equation}
    U^\beta_{N,\bar{\nu}}(x) = \min \sum_{i=0}^{N-1} u_i^\top \tilde{R}^\beta_{N,i} u_i \ \text{subject to~\eqref{eq:smpccons}.}
\end{equation}

\begin{thm}
    There is no guaranteed ordering between $\Gamma_{N,\bar{\nu}}$ and $\Gamma_{N-1,\bar{\nu}}$. Moreover, suppose $x \in \Gamma_{N-1,\bn} \setminus \Omega_{\bn}$. If
    \begin{equation}\label{eq:costlower}
        U^\beta_{N,\bar{\nu}}(x) > Nd_{\bn} + \beta c_{\bn} ,
    \end{equation}
    then $x \not\in \Gamma_{N,\bn}$.
\end{thm}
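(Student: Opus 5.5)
The proof has two parts. The first is the non-ordering claim. It follows from Theorem~\ref{thm:mono} together with the sufficient condition in the second part. The second part is a cost-lower-bound argument on the deterministic reformulation of $\bar{\mbb{P}}^u_{N,\bn}(x)$ used in the proof of the terminally unconstrained Lyapunov theorem. So I will prove the sufficient condition first and then argue non-ordering from it.

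For the second part, I fix $x \in \Gamma_{N-1,\bn}\setminus\Omega_{\bn}$ and suppose, for contradiction, that $x \in \Gamma_{N,\bn}$, i.e.\ $V^u_{N,\bn}(x) \le Nd_{\bn} + \beta c_{\bn}$. By the argument in the proof of the preceding theorem (invoking \cite[Lemma 2.40]{RR+17}), membership in $\Gamma_{N,\bn}$ forces the optimal expected terminal state to satisfy $\bar{x}^u_N \in \Omega_{\bn}\subseteq\mbb{X}_{f,\bn}$. The optimal input sequence $\mb{u}^u_N(x)$ therefore transfers the averaged system to $\Omega_{\bn}$ in $N$ steps while satisfying the input and (mean) state constraints. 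It is thus feasible for the problem defining $U^\beta_{N,\bn}(x)$, taking the target set there as $\Omega_{\bn}$, as the definition indicates. I then use the decomposition of the cost into nonnegative parts, where the state and terminal terms are nonnegative because $Q,P_{\bn}\succ0$:
\begin{equation*}
V^u_{N,\bn}(x) \;\ge\; \sum_{i=0}^{N-1} (u^u_i)^\top \tilde{R}^\beta_{N,i} u^u_i \;\ge\; U^\beta_{N,\bn}(x) \;>\; Nd_{\bn}+\beta c_{\bn}.
\end{equation*}
This contradicts $x\in\Gamma_{N,\bn}$, so $x\notin\Gamma_{N,\bn}$. The hypothesis $x\notin\Omega_{\bn}$ is only needed to make the transfer requirement nontrivial, so that $U^\beta_{N,\bn}(x)>0$.

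For the first part, it suffices to exhibit that neither inclusion is guaranteed. The inclusion $\Gamma_{N-1,\bn}\subseteq\Gamma_{N,\bn}$ can fail because the right-hand side of \eqref{eq:costlower} grows only linearly in $N$ (by $d_{\bn}$ per step). By contrast, the weights $\tilde{R}^\beta_{N,i}$ grow exponentially in $N$ for unstable $A$, as shown after Theorem~\ref{thm:mono} via $S^\beta_{N,i}$. Hence, for a point $x\in\Gamma_{N-1,\bn}\setminus\Omega_{\bn}$ requiring a nonzero control effort, $U^\beta_{N,\bn}(x)$ may exceed $Nd_{\bn}+\beta c_{\bn}$, and the sufficient condition then yields $x\in\Gamma_{N-1,\bn}\setminus\Gamma_{N,\bn}$. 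The reverse inclusion can also fail: in the deterministic limit $\bn\to1$ the variance terms vanish, and the standard enlargement $\Gamma_{N-1}\subseteq\Gamma_N$ holds, sometimes strictly. Together these show there is no general ordering.

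The main obstacle is the first part: making the non-ordering rigorous rather than heuristic. A general proof cannot exhibit a specific $x$ satisfying \eqref{eq:costlower}. I would therefore present it as "no guaranteed ordering": the monotonicity argument that yields $\Gamma_{N-1}\subseteq\Gamma_N$ deterministically breaks down exactly at the variance-weight increase identified in Theorem~\ref{thm:mono}. I would point to the numerical example in Sec.~\ref{sec:illus} as a concrete witness. A secondary care point is checking that the set to which $U^\beta_{N,\bn}$ transfers (stated as $\Omega_{\bn}$, with constraints \eqref{eq:smpccons} referencing $\mbb{X}_{f,\bn}$) contains $\bar{x}^u_N$; using $\Omega_{\bn}\subseteq\mbb{X}_{f,\bn}$ covers either reading.
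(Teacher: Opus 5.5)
Your argument for the sufficient condition is the paper's argument. Both lower-bound $V^u_{N,\bn}(x)$ by $U^\beta_{N,\bn}(x)$, using the Lemma~2.40 reasoning that membership in $\Gamma_{N,\bn}$ places the optimal mean terminal state in $\Omega_{\bn}$. Your contradiction framing only makes explicit the conditional feasibility that the paper's one-line ``lower bound'' leaves implicit.

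The paper gives no separate proof of the non-ordering claim beyond this sufficient condition, so your heuristic discussion there is extra. However, the example in Sec.~\ref{sec:illus} compares values of $V^s_{N,\bn}$, not the sets $\Gamma_{N,\bn}$, so you cannot cite it as a direct witness.
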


\begin{proof}
    If $x \in \Gamma_{N-1,\bn}$ then, by definition, $V^u_{N-1,\bn}(x) \leq (N-1)d_{\bar{\nu}} + \beta c_{\bn}$. For $x$ to lie in $\Gamma_{N,\bn}$, it is required that $V^u_{N,\bn}(x) \leq Nd_{\bar{\nu}} + \beta c_{\bn}$ with $x_N^u(x) \in \Omega_{\bn}$. A lower bound for $V^u_{N,\bn}(x)$ is $U^\beta_{N,\bar{\nu}}(x)$. Therefore, $V^u_{N,\bn}(x) > Nd_{\bar{\nu}} + \beta c_{\bn}$ if~\eqref{eq:costlower} holds.
\end{proof}

The result can be sharpened when $d_{\bn}$ is estimated in the usual way that omits the cost of control inputs. In particular, since $\mbb{U}$ contains the origin, $R$ is positive definite, and $\mytrace{Q\Sigma} \geq 0$, the bound~\eqref{eq:dconst} is satisfied for any $d_{\bn}$ that satisfies
\begin{equation}
     (\forall x \not\in \Omega_{\bar{\nu}}) \ x^\top Q x > d_{\bar{\nu}}.
\end{equation}
The smallest $x^\top Q x$ occurs on the boundary $x^\top P x = c_{\bar{\nu}}$; thus $d_{\bar{\nu}} = \min \{ x^\top Q x : x^\top P x = c_{\bar{\nu}}\}$, which has the solution
\begin{equation}
     d_{\bar{\nu}} = c_{\bar{\nu}} \lambda_{\min}(P^{-1}Q).
\end{equation}
In this case, the lower bound on $V^s_{N,\bn}(x)$ is improved to $d_{\bn} + U^s_{N,\bar{\nu}}(x)$, and consequently the sufficient condition for $x \not\in \Gamma_{N,\bn}$ becomes 
    \begin{equation}
        U^\beta_{N,\bar{\nu}}(x) >  (N-1)d_{\bn}+ \beta c_{\bn}.
    \end{equation}
It can be shown that $\tilde{R}^\beta_{N,i}$, which governs the size of $U^\beta_{N,\bar{\nu}}(x)$, may be written as
\begin{equation*}
\tilde{R}^\beta_{N,i}
=
\tilde{R}^1_{N,i}
+
(\beta - 1)\,\bar{\nu}(1-\bar{\nu})\,
B^\top (A^{N-1-i})^\top P_{\bar{\nu}} A^{N-1-i} B.
\end{equation*}
Thus, $\tilde{R}^\beta_{N,i}$ depends affinely on $\beta$, with a positive semidefinite increment proportional to $\beta-1$. In contrast, its dependence on the horizon length $N$ is nonlinear and, for unstable $A$, grows exponentially (in the sense of induced norm or eigenvalues) when $A$ is unstable.

The natural conclusion is that $\Gamma_{N,\bn}$ forms a progressively more conservative estimate of the expected region of attraction as the horizon is increased; thus, in contrast to the deterministic case, this approach does not offer an attractive means of systematically enlarging inner approximations of the region of attraction. Taken together with the lack of monotonicity, this indicates that stochastic MPC formulations based on the expected dynamics of the system with binary input uncertainty may exhibit fundamentally undesirable structural properties.

\subsection{Probabilistic dominance of optimistic MPC}

The preceding discussion has focused on structural properties of stochastic MPC formulations based on expected dynamics. We now highlight a more fundamental limitation of this approach. The predicted trajectories used in such formulations correspond to the averaged system, yet no realization of~\eqref{system} evolves according to these dynamics, since $\nu_k \in \{0,1\}$ and $\nu_k \neq \bar{\nu}$. As a consequence, while the stochastic formulation guarantees invariance of $\mbb{X}_{N,\bn}$ with respect to the average closed-loop dynamics, neither realized state update,
\begin{equation}
    x_{k+1}=Ax_k \quad \text{nor} \quad x_{k+1}=Ax_k+B{\kappa}_{N,\bn}^s(x_k),
\end{equation}
is guaranteed to remain in $\mbb{X}_{N,\bn}$.

In contrast, the optimistic MPC formulation—corresponding to assuming $\bar{\nu}=1$—admits predicted trajectories that are consistent with realizations of the system. As shown in this section, this leads to a form of probabilistic dominance of optimistic MPC over stochastic MPC with respect to invariance and recursive feasibility.

In what follows, the sets $\mc{X}_{1}$ (and $\mc{X}_{\bar{\nu}}$) denote either $\mbb{X}_{N,1}$ or $\Gamma_{N,1}$ (resp. $\mbb{X}_{N,\bar{\nu}}$ or $\Gamma_{N,\bar{\nu}}$), depending on whether the relevant formulation includes or omits a terminal constraint. Likewise, $\kappa_{N,1}$ ($\kappa_{N,\bn}$) denotes the corresponding control law, $\kappa^s_{N,1}$ or $\kappa^u_{N,1}$ ($\kappa^s_{N,\bn}$ or $\kappa^u_{N,\bn}$). We refrain from making a general comparison between $\mc{X}_{1}$ and $\mc{X}_{\bn}$ because the stabilizing terminal gains used in their specification, respectively $K_{f,1}$ and $K_{f,\bar{\nu}}$, are permitted to be different. In fact, there is considerably more flexibility in design of the former since only~\eqref{eq:Plyapo} and not~\eqref{eq:Plyap} must be met. Nonetheless, this design flexibility must be wielded carefully, as its impact on the stochastic properties of the closed-loop system is not immediately clear. On the positive side, as we shall see next, optimistic MPC offers a form of probabilistic dominance that provides a more meaningful connection between predicted and realized trajectories.


\begin{thm}\label{thm:D}
For all $x \in \mc{X}_{1}$, the closed-loop state under the optimistic policy satisfies
\begin{equation}
    \Po{Ax_k + \nu_k B \kappa_{N,1}(x_k)\in\mc{X}_{1}\given x_k=x} \geq \bar{\nu}.
\end{equation}
Conversely, for the stochastic MPC formulation based on expected dynamics, \emph{either} there exists a nonempty set
\begin{equation}
    \mathcal{D} \subset \mc{X}_{\bar{\nu}},
\end{equation}
such that, for all $x\in\mathcal{D}$,
\begin{equation}
\Po{Ax_k + \nu_k B \kappa_{N,\bar{\nu}}(x_k)\in\mc{X}_{\bar{\nu}}\given x_k=x} = 0.
\end{equation}
\emph{or}, for all $x \in \mc{X}_{\bn}$,
\begin{equation}
    \Po{Ax_k + \nu_k B \kappa_{N,\bar{\nu}}(x_k)\in\mc{X}_{\bar{\nu}}\given x_k=x} \geq \min\{\bn,1-\bn\}.
\end{equation}
\end{thm}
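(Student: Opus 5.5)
The optimistic half comes from the fact that the realized update with $\nu_k=1$ coincides exactly with the nominal model used by the optimistic controller. Fix $x\in\mc{X}_1$. On the event $\{\nu_k=1\}$, which has probability $\bn$ by Assumption 3 and is independent of $x_k$, the successor state is $Ax+B\kappa_{N,1}(x)$. This is the nominal successor.

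If $\mc{X}_1=\mbb{X}_{N,1}$, I will use the standard shifted sequence $\{u_1^s,\dots,u_{N-1}^s,K_{f,1}x_N^s\}$. Assumption~\ref{assump:Xf1} makes it feasible for $\mbb{P}^s_{N,1}$ at the nominal successor, so the successor lies in $\mbb{X}_{N,1}$. If $\mc{X}_1=\Gamma_{N,1}$, I will invoke the deterministic decrease ${V}^u_{N,1}(Ax+B\kappa^u_{N,1}(x))\le V^u_{N,1}(x)-\ell(x,\kappa^u_{N,1}(x))\le Nd_1+\beta c_1$ stated above. Since $\Gamma_{N,1}$ is a sublevel set, the successor again lies in $\Gamma_{N,1}$. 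In either case
\[
\Po{Ax_k+\nu_kB\kappa_{N,1}(x_k)\in\mc{X}_1\given x_k=x}\ge\Po{\nu_k=1}=\bn .
\]
Any additional contribution from the event $\nu_k=0$ only increases the probability, which is why the statement is an inequality.

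For the stochastic half I will argue by a dichotomy. The realized successor takes only two values, $x^+_0=Ax$ with probability $1-\bn$ and $x^+_1=Ax+B\kappa_{N,\bn}(x)$ with probability $\bn$. For each $x\in\mc{X}_{\bn}$, the probability that the successor lies in $\mc{X}_{\bn}$ is therefore one of $0$, $1-\bn$, $\bn$ or $1$. Define
\[
\mathcal{D}\coloneqq\{x\in\mc{X}_{\bn}: Ax\notin\mc{X}_{\bn}\ \text{and}\ Ax+B\kappa_{N,\bn}(x)\notin\mc{X}_{\bn}\}.
\]
If $\mathcal{D}\neq\emptyset$, the first alternative holds with this set. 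It is a proper subset of $\mc{X}_{\bn}$, since $0\notin\mathcal{D}$: at the origin $\kappa_{N,\bn}(0)=0$, and $0\in\mc{X}_{\bn}$. If $\mathcal{D}=\emptyset$, then for every $x\in\mc{X}_{\bn}$ at least one of the two successors lies in $\mc{X}_{\bn}$. The probability is then at least the smaller of the two atom masses, namely $\min\{\bn,1-\bn\}$, which is the second alternative.

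The hard part is not this logic, which is essentially a case split, but its interpretation. As stated, the dichotomy is tautological. Substantive content comes only from showing that the first branch can actually occur, that is, that $\mathcal{D}$ is nonempty for concrete data. The earlier invariance result for $\mc{X}_{\bn}$ controls only the mean successor $Ax+\bn B\kappa_{N,\bn}(x)$. Both realized successors straddle this point: $x^+_0$ and $x^+_1$ are the endpoints of the segment $\{Ax+\theta B\kappa_{N,\bn}(x):\theta\in[0,1]\}$, which contains the mean. Since $\mc{X}_{\bn}$ need not be convex in the stochastic case, and in any event the interior point may lie inside while both endpoints lie outside, nothing prevents both from escaping. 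I would therefore remark after the proof that nonemptiness of $\mathcal{D}$ is demonstrated by the examples in Section~\ref{sec:illus}. I would also note that the theorem asserts only the dichotomy, so no further argument is needed for it.

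A minor subtlety is measurability and well-posedness of $\kappa_{N,\bn}$. I will treat $\kappa_{N,\bn}$ as any selection of the minimizer, so the two successors are well defined pointwise. This suffices because the probability is taken only over the single Bernoulli variable $\nu_k$.
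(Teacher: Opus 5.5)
Your proposal is correct and follows essentially the paper's argument. The paper makes the same case split via $\mathcal{S}_{\bn}=\{x\in\mc{X}_{\bn}: Ax+B\kappa_{N,\bn}(x)\in\mc{X}_{\bn}\}$ and $\mathcal{F}_{\bn}=\{x\in\mc{X}_{\bn}: Ax\in\mc{X}_{\bn}\}$, taking $\mathcal{D}=\mc{X}_{\bn}\setminus(\mathcal{S}_{\bn}\cup\mathcal{F}_{\bn})$, which is exactly your set, and its optimistic half likewise rests on invariance of $\mc{X}_1$ under the nominal closed loop; that invariance the paper merely asserts, whereas you justify it via the shifted sequence and the value decrease.
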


\begin{proof}
The realized system, under the control law $u = {\kappa}_{N,\bar{\nu}}(x)$, has exactly two possibilities for the successor to a state $x$:
    \begin{equation}
        Ax +  B {\kappa}_{N,\bar{\nu}}(x) \quad \text{or} \quad Ax,
    \end{equation}
    the former with probability $\bar{\nu}$ and the latter with probability $1-\bar{\nu}$.  Define the largest subset of $\mc{X}_{\bar{\nu}}$ such that the state remains within  $\mc{X}_{\bar{\nu}}$ under \emph{successful} realizations of the variable $\nu_k\in\{0,1\}$ as
    \begin{equation*}
    \mathcal{S}_{\bar{\nu}} \coloneqq \{ x \in \mc{X}_{\bar{\nu}} : Ax +  B {\kappa}_{N,\bar{\nu}}(x) \in \mc{X}_{\bar{\nu}}\}.
    \end{equation*}
    Since $\mc{X}_{\bar{\nu}}$  is invariant for the averaged system $x^+ = Ax +  \bar{\nu}B {\kappa}^u_{N,\bar{\nu}}(x)$ but \emph{not necessarily} for the realized system $x^+=Ax +  B {\kappa}_{N,\bar{\nu}}(x)$, then $\mathcal{S}_{\bar{\nu}}\subseteq \mc{X}_{\bar{\nu}}$.
 The set $\mathcal{S}_{\bar{\nu}}$ is non-empty because it includes the point $x=0$ and an open neighbourhood around it. It follows that
 \begin{equation}
     \Po{Ax_k + \nu_k B \kappa_{N,\bar{\nu}}(x_k)\in\mc{X}_{\bar{\nu}}\given x_k \in \mathcal{S}_{\bar{\nu}}} \geq \bar{\nu}.
 \end{equation}
 Likewise, define the largest subset of $\mc{X}_{\bar{\nu}}$ such that the state remains within  $\mc{X}_{\bar{\nu}}$ under \emph{unsuccessful} realizations:
\begin{equation}
    \mc{F}_{\bar{\nu}}  =  \{ x \in \mc{X}_{\bar{\nu}}  : Ax \in \mc{X}_{\bar{\nu}}\} \subset \mc{X}_{\bar{\nu}}.
\end{equation}
The set is non-empty because it includes the point $x=0$ and an open neighbourhood around it.
It follows that
 \begin{equation}
     \Po{Ax_k + \nu_k B \kappa_{N,\bar{\nu}}(x_k)\in\mc{X}_{\bar{\nu}}\given x_k \in \mc{F}_{\bar{\nu}}} \geq 1-\bar{\nu}.
 \end{equation}
It follows from these constructions that
\begin{align*}
    \Po{x_{k+1}\in\mc{X}_{\bar{\nu}}\given x_k \in \mathcal{S}_{\bar{\nu}} \cap \mathcal{F}_{\bar{\nu}}} &= 1, \\
    \Po{x_{k+1}\in\mc{X}_{\bar{\nu}}\given x_k \in \mathcal{S}_{\bar{\nu}} \setminus  \mathcal{F}_{\bar{\nu}}} &=  \bar{\nu}, \\
    \Po{x_{k+1}\in\mc{X}_{\bar{\nu}}\given x_k \in \mathcal{F}_{\bar{\nu}} \setminus  \mathcal{S}_{\bar{\nu}}} &= 1-\bar{\nu}, \\
    \Po{x_{k+1}\in\mc{X}_{\bar{\nu}}\given x_k \not\in \mathcal{S}_{\bar{\nu}} \cup \mathcal{F}_{\bar{\nu}}} &= 0.
\end{align*}
where $x_{k+1} = Ax_k + \nu_k B \kappa_{N,\bar{\nu}}(x_k)$. Since $\mathcal{S}_{\bar{\nu}}, \mathcal{F}_{\bar{\nu}} \subseteq \mc{X}_{\bar{\nu}}$ then $\mathcal{S}_{\bar{\nu}} \cup \mathcal{F}_{\bar{\nu}} \subseteq \mc{X}_{\bar{\nu}}$. Moreover, $\mathcal{S}_{\bar{\nu}} \cup \mathcal{F}_{\bar{\nu}}$ is non-empty. Therefore, \emph{either} $\mathcal{S}_{\bar{\nu}} \cup \mathcal{F}_{\bar{\nu}} = \mc{X}_{\bar{\nu}}$, in which case 
\begin{equation}
\Po{Ax_k + \nu_k B \kappa_{N,\bar{\nu}}(x_k)\in\mc{X}_{\bar{\nu}}\given x_k=x} \geq \min\{\bn,1-\bn\} > 0,
\end{equation}
\emph{or} $\mathcal{S}_{\bar{\nu}} \cup \mathcal{F}_{\bar{\nu}} \subsetneq \mc{X}_{\bar{\nu}}$, in which case there exists a non-empty subset of $\mc{X}_{\bar{\nu}}$, in particular $\mc{D} = \mc{X}_{\bar{\nu}} \setminus (\mathcal{S}_{\bar{\nu}} \cup \mathcal{F}_{\bar{\nu}})$ wherein all $x_k \in \mc{D} \implies x_{k+1} \not\in \mc{X}_{\bar{\nu}}$ with probability $1$.

In contrast, the set $\mc{X}_{1}$ is invariant for the system $x^+ = Ax + B {\kappa}_{N,1}(x)$. Therefore, the corresponding largest subset that is invariant for successful realizations of the variable $\nu_k\in\{0,1\}$ is
 \begin{equation*}
     \mathcal{S}_1 \coloneqq \{ x \in \mc{X}_{1} : Ax +  B {\kappa}_{N,1}(x) \in \mc{X}_{1}\} = \mc{X}_{1}.
 \end{equation*}
Therefore
    \begin{equation*}
\Po{Ax_k + \nu_k B \kappa_{N,1}(x_k)\in\mc{X}_{1}\given x_k=x} \geq 
\bn.
\end{equation*}
This completes the proof.
\end{proof}

\section{Illustrative examples}
\label{sec:illus}

Consider the system dynamics~\eqref{system} with
\begin{equation*}
    A=
\begin{bmatrix}
    1.2 & 1 \\ 0 &  1
\end{bmatrix}, \quad
B=
\begin{bmatrix}
    0.5 \\ 1
\end{bmatrix}, 
\end{equation*}
and state constraint set $\mathbb{X} = \{ x=(x_{1},x_2) : \abs{x_{1}} \leq 4.5, \abs{x_{2}} \leq 1.5 \}$. The critical arrival threshold for this system is determined to be $\nu_{c} \approx 0.31$ (to two d.p.), and we consider the scenario where $\bar{\nu}=0.7>\nu_{c}$.

We set  $Q= I_2$, $R=1$ and $\beta=1$. The terminal gain $K_{f,\bar{\nu}} = \begin{bmatrix}-0.5304, -1.1612\end{bmatrix}$ is the optimal one associated with the MARE, with corresponding $P_{\bn}=\left[\begin{smallmatrix} 5.2070 & 4.7835 \\ 4.7835 & 9.6280 \end{smallmatrix}\right]$, thus satisfies~\eqref{eq:ly2}. For optimistic MPC, the terminal gain $K_{f,1} = [-0.6302, -1.1009]$ is the optimal LQR gain, with $P_1=\left[\begin{smallmatrix} 3.3109 & 1.5931 \\ 1.5931 & 2.7647 \end{smallmatrix}\right]$, which satisfies~\eqref{eq:Plyap3}. We consider a horizon of $N=5$.

Firstly, for the input set $\mathbb{U}=\{u : \abs{u} \leq 1.5 \}$, Tab.~\ref{tab:values} compares the value functions of SMPC and OMPC at a state $x = \begin{bmatrix} -1 & -1.5 \end{bmatrix}^\top$ when the horizons are $N=5$ and $6$. This illustrates Theorem~\ref{thm:mono}; the SMPC value function increases with $N$. In fact, $\tilde{R}^1_{5,0} = 76.7507 = \tilde{R}^1_{6,1}$ but the $N=6$ problem imposes the extra input weight $\tilde{R}^1_{6,0} = 135.0247$ on the first input $u_0$.

\begin{table}[b]
    \centering
    \caption{Comparison of stochastic and optimistic value functions with horizons of $N=5$ and $6$, when $\mathbb{U}=\{u : |u| \leq 1.5 \}$ and $x = \begin{bmatrix} -1 & -1.5 \end{bmatrix}^\top$.}
    \begin{tabular}{rrrr}
    \toprule
    $V^s_{5,1}(x)$ & $V^s_{6,1}(x)$ & $V^s_{5,\bn}(x)$ & $V^s_{6,\bn}(x)$ \\
    \midrule
    $9.0449$ & $9.0449$ & $107.6320$ & $159.2154$\\
    \bottomrule
    \end{tabular}\label{tab:values}%
\end{table}

When the input constraint set is $\mathbb{U}=\{u : \abs{u} \leq 1.5 \}$, there is no ordering between the terminal sets $\mbb{X}_{f,\bn}$ and $\mbb{X}_{f,1}$ but $\mbb{X}_{5,\bn} \subset \mbb{X}_{5,1}$. This suggests an apparent feasibility advantage for optimistic MPC. However, in this case, the region $\mc{D}$ is empty (Fig.~\ref{barPU1}), implying that, under SMPC, all states in $\mbb{X}_{N,\bn}$ admit trajectories that remain feasible with non-zero probability. 

\begin{figure}[t]
\centering\footnotesize
\subfloat[SMPC ($\bn$)]{\includegraphics[width=0.45\linewidth]{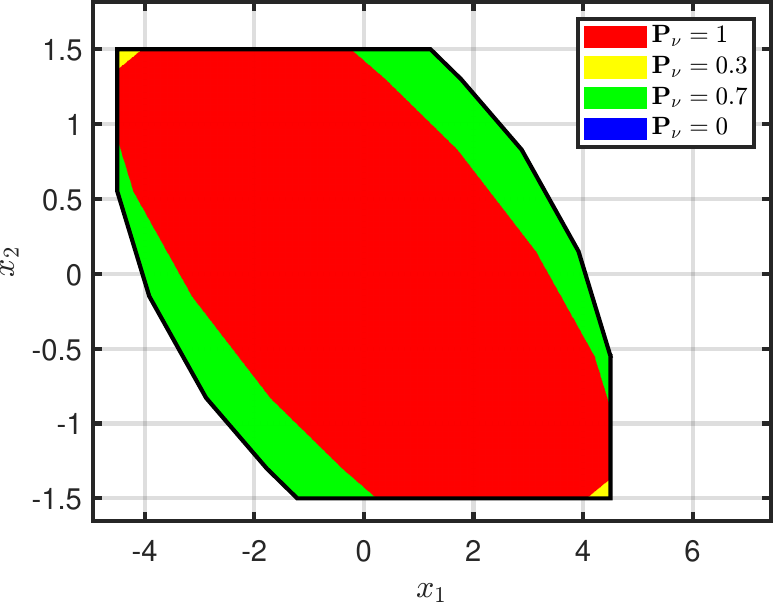}}\hfil
\subfloat[OMPC ($\bn=1$)]{\includegraphics[width=0.45\linewidth]{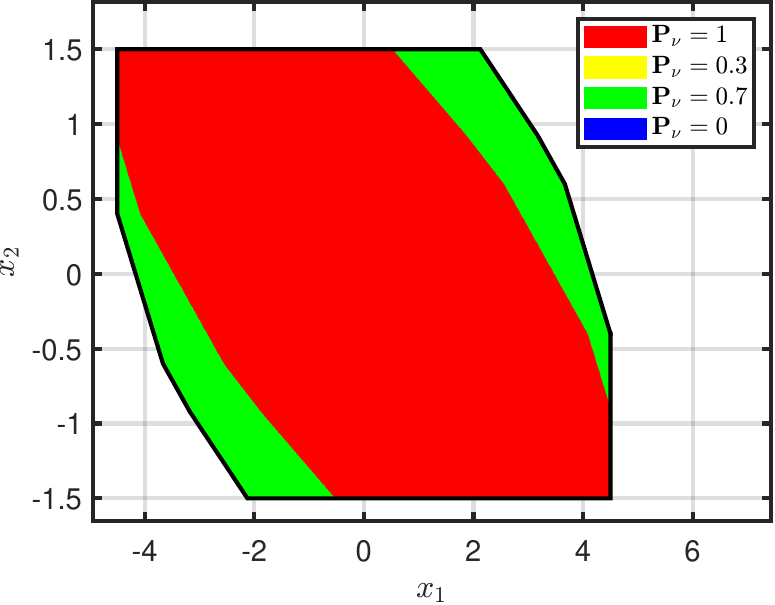}}
\caption{Characterization of the regions $\mc{S}_{\bn} \cap \mc{F}_{\bn}$ and $\mc{S}_{1} \cap \mc{F}_{1}$ (red), $\mc{S}_{\bn}\setminus \mc{F}_{\bn}$ and $\mc{S}_{1}\setminus \mc{F}_{1}$ (green), $\mc{F}_{\bn}\setminus \mc{S}_{\bn}$ and $\mc{F}_{1}\setminus \mc{S}_{1}$ (yellow), and $\mc{D} = \mbb{X}_{N,\bn} \setminus (\mc{S}_{\bn} \cup \mc{F}_{\bn})$ (blue) for the case that $\mbb{U} = \{|u| : {u} \leq 1.5\}$. The set $\mc{D}$, from which $x$ leaves $\mbb{X}_{N,\bn}$ with probability $1$, is empty.}
\label{barPU1}
\end{figure}

In contrast, when the input constraint set is $\mathbb{U}=\{u : \abs{u} \leq5 \}$, it can be verified that $\mbb{X}_{5,\bn} = \mbb{X}_{5,1}$. Therefore, neither formulation enjoys an inherent advantage in terms of feasibility. This removes feasibility as a confounding factor and allows the comparison to focus on the probabilistic behaviour induced by the respective control laws. Fig.~\ref{barPU5} illustrates the sets characterized in Theorem~\ref{thm:D} for this case. Notably, the set $\mc{D}$ is non-empty, meaning there exist states within $\mbb{X}_{5,\bn}$ whose successors, under stochastic MPC, leave the feasibility region with probability~$1$.

\begin{figure}[t]
\centering\footnotesize
\subfloat[SMPC ($\bn$)]{\includegraphics[width=0.45\linewidth]{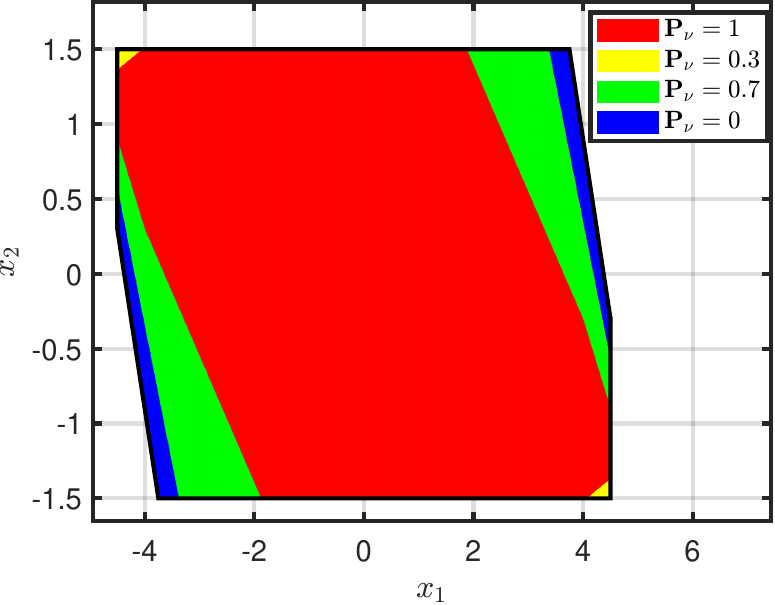}}\hfil
\subfloat[OMPC ($\bn=1$)]{\includegraphics[width=0.45\linewidth]{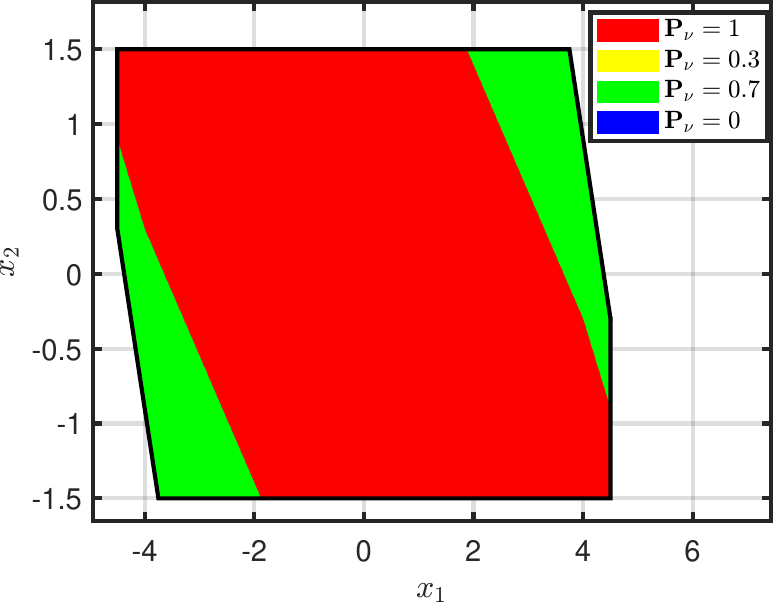}}
\caption{Characterization of the regions $\mc{S}_{\bn} \cap \mc{F}_{\bn}$ and $\mc{S}_{1} \cap \mc{F}_{1}$ (red), $\mc{S}_{\bn}\setminus \mc{F}_{\bn}$ and $\mc{S}_{1}\setminus \mc{F}_{1}$ (green), $\mc{F}_{\bn}\setminus \mc{S}_{\bn}$ and $\mc{F}_{1}\setminus \mc{S}_{1}$ (yellow), and $\mc{D} = \mbb{X}_{N,\bn} \setminus (\mc{S}_{\bn} \cup \mc{F}_{\bn})$ (blue) for the case that $\mbb{U} = \{|u| : {u} \leq 5\}$. The set $\mc{D}$, from which $x$ leaves $\mbb{X}_{N,\bn}$ with probability $1$, is non-empty.}
\label{barPU5}
\end{figure}

\section{Conclusion and future work}
\label{sec:conc}

This paper has studied the use of expectation-based stochastic MPC for linear constrained systems subject to multiplicative binary input uncertainty. We have shown that such formulations can exhibit structural properties that differ fundamentally from those of deterministic MPC, including the loss of monotonicity of the value function with respect to the prediction horizon and the potential deterioration of value function–based approximations of the region of attraction. Moreover, we have demonstrated that stochastic MPC may certify feasibility for states from which constraint violation occurs almost surely, revealing a fundamental disconnect between probabilistic constraint satisfaction and realized closed-loop feasibility.

 These results highlight intrinsic limitations of stochastic MPC formulations based on expected dynamics, in that (i) the controller lacks desirable properties that follow naturally in deterministic settings and (ii) the guarantees they provide do not reflect the behaviour of the realized system. The findings suggest that that care is required when interpreting stability and feasibility certificates derived from averaged models in the presence of binary multiplicative uncertainty.

It will be interesting to enquire in future work whether the results hold for other stochastic predictive controllers, including scenario-based approaches and those employing chance-constrained formulations.

\bibliographystyle{IEEEtran}
\bibliography{root}

\end{document}